\documentclass[aps,pra,twocolumn,10pt,floatfix,aps]{revtex4-1}
\usepackage{amsmath,graphicx, verbatim,enumerate,mathrsfs,mathtools}
\usepackage{bbm}
\usepackage{amssymb}
\usepackage{graphicx}
\usepackage{times}
\usepackage{color}
\usepackage[amsmath,hyperref,standard,thmmarks]{ntheorem}
\usepackage{palatino}
\usepackage[english]{babel}

\renewtheorem{theorem}{Theorem}
\renewtheorem{proposition}{Proposition}

\renewtheorem{lemma}[theorem]{Lemma}
\renewtheorem{corollary}[theorem]{Corollary}
\renewtheorem{definition}[theorem]{Definition}

\renewtheorem{example}[theorem]{Example}
\renewtheorem{remark}[theorem]{Remark}

\usepackage{hyperref}

\newcommand{\lo}[1]{{\color{black} #1}}

\newcommand{\NN}{\mathbb{N}}

\newcommand{\mc}[1]{\mathcal{#1}}

\newcommand{\Tr}{\operatorname{Tr}}

\newcommand{\PPP}{\boldsymbol{P}_0}
\newcommand{\QQQ}{\boldsymbol{P}_\text{f}}
\newcommand{\PP}{P_0}
\newcommand{\QQ}{P_\text{f}}


\begin{document}
\title{Nonlocality free wirings and the distinguishability between Bell boxes}

\author{Rodrigo Gallego}
\affiliation{Dahlem Center for Complex Quantum Systems, Freie Universit\"at Berlin, 14195 Berlin, Germany}
\author{Leandro Aolita}
\affiliation{Instituto de F\'isica, Universidade Federal do Rio de Janeiro, P. O. Box 68528, Rio de Janeiro, RJ 21941-972, Brazil}

\begin{abstract}
Bell nonlocality can be formulated in terms of a resource theory with local-hidden variable models as resourceless objects. Two such theories are known, one built upon local operations assisted by shared randomness (LOSRs) and the other one allowing, in addition, for prior-to-input classical communication.
We show that prior communication, although unable to create nonlocality, leads to wirings not only beyond LOSRs but also not contained in a much broader class of (nonlocality-generating) global wirings. Technically, this is shown by proving that it can improve the statistical distinguishability between Bell correlations optimised over all fixed measurement choices. This has implications in nonlocality quantification, and leads us to a natural universal definition of Bell nonlocality measures. To end up with, we also consider the statistical strength of nonlocality proofs. We point out some issues of its standard definition in the resource-theoretic operational framework, and suggest simple fixes for them. Our findings reveal non-trivial features of the geometry of the set of wirings and may have implications in the operational distinguishability of nonlocal behaviors.
\end{abstract}
\maketitle
\section{Introduction}

Bell nonlocality is an exotic quantum phenomenon by which correlations between the outcomes of space-like separated measurements cannot be explained by local hidden-variable theories, i.e., by any classical model relaying exclusively on past common causes \cite{Bell64,Brunner13}. Apart from their fundamental implications, such nonlocal correlations have been identified as a valuable resource for practical information-theoretic tasks, such as quantum key distribution \cite{Barrett05, Acin06, Acin07}, perfect-randomness expansion \cite{Colbeck07, Pironio10} and \cite{Colbeck12,Gallego13} amplification, or distributed computations \cite{ Buhrman10}, e.g. 
These protocols exploit the nonlocal correlations shared between distant users as physical resources, and process their generated classical information locally, with the aid of a-priori communicated bits or shared randomness. In these scenarios, Bell nonlocality plays the role of an operational resource  for classical-information processing, i.e., one that can be composed, acted on, and transformed between its different forms, depending on one's needs \cite{Barrett05b, Allcock09, Lang14, Beigi14}. This is formalised by so-called \emph{resource theories of Bell nonlocality} \cite{Gallego12,Joshi13, deVicente14,Horodecki15}.

Resource theories constitute powerful formalisms in quantum information for the abstract treatment of a physical property as an operational resource. They have been built also for other types of quantum nonlocality, such as entanglement \cite{HorodeckiHorodeckiHorodecki} and Einstein-Podolski-Rosen steering \cite{Gallego15}, as well as athermal states \cite{Resourcetheorythermodynamics}, quantum coherence \cite{Streltsov16}, and several other notorious properties of quantum systems (see, e.g., Ref. \cite{Brandao15} and references therein). 
A resource theory identifies a set of mathematical objects -- describing states of the physical system under scrutiny --, a subset thereof of uninteresting objects -- composed of the states without the property considered the resource in question --, and a class of free operations -- consisting of physical transformations under which the resourceless subset is closed. For Bell nonlocality, two classes of free operations are well studied: one restricted to \emph{local operations assisted by shared randomness} (LOSR) \cite{deVicente14,Horodecki15} and the other allowing also for communication that contains no information about the measurement settings of the transformed object, called \emph{wirings and prior-to-input classical communication} (WPICCs) \cite{Gallego12}. Until this work, it was not clear whether these two classes displayed any difference. In fact, they have sometimes been referred to as the same single class \cite{deVicente14,Horodecki15}. Here, we clarify this question showing that the classes are actually inequivalent.

On the other hand, central to any resource theory is the distinguishability between the objects considered, and, in particular, the distinguishability between resourceful and resourceless ones. The \emph{distinguishability between Bell correlations} was studied in Ref. \cite{vanDam05}. There, a measure of statistical distinguishability between the measurement outcomes of a given nonlocal device and any local one, called the statistical strength of nonlocality proofs, was put forward. However, such measure was derived from a game-theoretic perspective and consistency with resource-theoretic operational approaches to nonlocality was not checked for.

Here, we study the interconnection between the two main resource theories of Bell non-locality, based on the paradigms of LOSRs and WPICCs, in view of the operational task of distinguishing Bell boxes. To begin with, we derive an explicit parametrisation for a generic WPICC map. Then, we show that WPICCs can be used to increase the statistical distinguishability between two Bell boxes. This is, in contrast, impossible not only with LOSRs, but also with the more general \emph{global wirings} (GWs), defined analogously to LOSRs but with arbitrary nonlocal correlations playing the role of shared randomness. Technically, this is proven by showing that the relative entropy between Bell-box behaviors increases under WPICCs, while it can only decrease under all GWs. 

We then move on to study the quantification of Bell nonlocality with respect to the two different resource theories. In particular, we show that every nonlocality measure with respect to LOSRs is also a valid measure with respect to WPICCs, and that the converse holds if the quantifier satisfies the natural requirement of convexity. This leads us to a universal definition of Bell non-locality monotones, consistent with both resource theories. As an example of such monotone, we provide a definition of the relative entropy of nonlocality in terms of the relative entropy between behaviors. Finally, we probe the three variants, introduced in Ref. \cite{vanDam05}, of the statistical strength of nonlocality proofs as Bell nonlocality monotones. One of them coincides with the relative entropy of nonlocality and is, therefore, automatically a nonlocality monotone. We prove that, from the other two, one is a Bell nonlocality monotone whereas the other one is not even monotonous under LOSRs. We end up by providing physical arguments by which, from the two variants defining satisfactory non-locality measures, we find one better than the other as quantifier of the statistical strength of nonlocality proofs, even from a game-theoretic perspective.

The paper is structured as follows. In Sec. \ref{sec:Basics} we introduce the basic notions and notation. In Sec. \ref{sec:Op_Framework} we discuss the different classes of transformations between Bell boxes. An explicit analytic expression for generic WPICCs is provided there too. In Sec. \ref{sec:rel_ent} we prove that the relative entropy between behaviors can increase under WPICCs, while it can only decrease under GWs. In Sec. \ref{sec:nl_monotone} we study generic Bell nonlocality monotones and, in particular, the relative entropy of nonlocality.
In Sec. \ref{sec:st_strength} we revisit statistical strength of nonlocality proofs from a resource-theoretic perspective. We finish the paper in Sec. \ref{sec:conclu} with a few relevant final remarks. 

\section{Preliminaries}
\label{sec:Basics}
In this section we introduce the basics of Bell nonlocality \cite{Brunner13}. We consider two space-like separated experimenters, Alice and Bob, who make local measurements on a bipartite system composed of two black-box devices. Alice's box has $x\in[s_A]$ measurement settings (inputs) and $a\in[r_A]$ measurement results (outputs) and, similarly, Bob's box admits $y\in[s_{\rm{b}}]$ inputs and returns $b\in[r_B]$ outputs, where $s_A,\ r_A,\ s_{\rm{b}},\ r_B \in\NN$, and the notation $[n]\coloneqq\{0,\ldots,n-1\}$, for any $n \in\mathbb{N}$, has been introduced. For notational simplicity, but without loss of generality, we take $s\coloneqq s_A=\ S_{\rm{b}}$ and $r\coloneqq r_A=r_B$. 
The experiment is described by a normalised bipartite conditional probability distribution \begin{equation}
\label{def:Prob_dist}
\boldsymbol{P}\coloneqq\{P(a,b|x,y)\}_{a,b\in[r],\, x,y\in[s]},
\end{equation}
where $P(a,b|x,y)$ is the conditional probability of obtaining the output values $a$ and $b$ given that the input values $x$ and $y$. We refer to any normalised bipartite conditional probability distribution as a \emph{black-box behavior} or, simply, \emph{behavior}, for short.

Since the measurements constitute space-like separated events, the statistics must fulfil the no-signaling principle, given by the well-known conditions
\begin{subequations}
\label{eq:nosignaling}
\begin{align}
P(b|y)=&\sum_{a'}P(a',b|x',y) \:\:\: \forall\ x'\in[s],\\
P(a|x)=&\sum_{b'}P(a,b'|x,y') \:\:\: \forall\ y'\in[s],
\end{align}
\end{subequations}
for all $a,b\in[r]$ and $x,y\in[s]$. That is, the marginal conditional distribution  $\{P(a|x,y)\}_{a\in[r],\, x,y\in[s]}$ ($\{P(a|x,y)\}_{b\in[r],\, y\in[s]}$) for Alice (Bob) should not depend on Bob's (Alice's) measurement choice.  We refer to the set of all behaviors that fulfil the linear constraints \eqref{eq:nosignaling} as $\mathsf{NS}$. In addition, we denote by $\mathsf{Q}$ the set of all quantum behaviors, i.e., all those that can be expressed as 
\begin{align}
\label{eq:Q}
P(a,b|x,y)=\Tr\left[\varrho_{AB}\,M_x^a\otimes M_y^b\right],
\end{align}
where $\varrho_{AB}$ is a bipartite quantum state and $M_x^a$ and $M_y^b$ are local measurement operators corresponding to Alice and Bob, respectively. In turn, we call $\mathsf{L}$ the set of all local behaviors, i.e., all those for which there exists a normalised probability distribution $\boldsymbol{P}_{\Lambda}$ and two normalised conditional probability distributions $\boldsymbol{P}_{A|X,\Lambda}$ and $\boldsymbol{P}_{B|Y,\Lambda}$ such that
\begin{align}
\label{eq:LHV}
P(a,b|x,y)=\sum_{\lambda}P_{\Lambda}(\lambda)\, P_{A|X,\Lambda}(a|x,\lambda)\, P_{B|Y,\Lambda}(b|y,\lambda),
\end{align}
for all $a,b\in[r]$ and $x,y\in[s]$. The variable $\lambda$ is called a local-hidden variable and the decomposition \eqref{eq:LHV} is accordingly referred to as a local-hidden variable (LHS) model for the behavior. It is a well-known fact that 
\begin{equation}
\label{eq:L_subset_Q_subset_NS}
\mathsf{L} \subset \mathsf{Q}\subset\mathsf{NS}.
\end{equation}
Finally, we say that any $\boldsymbol{P}\notin\mathsf{L}$ is a \emph{nonlocal} behavior, and refer to this fact as \emph{nonlocality}.

\begin{figure*}[t!]
\centering
\includegraphics[width=1\linewidth]{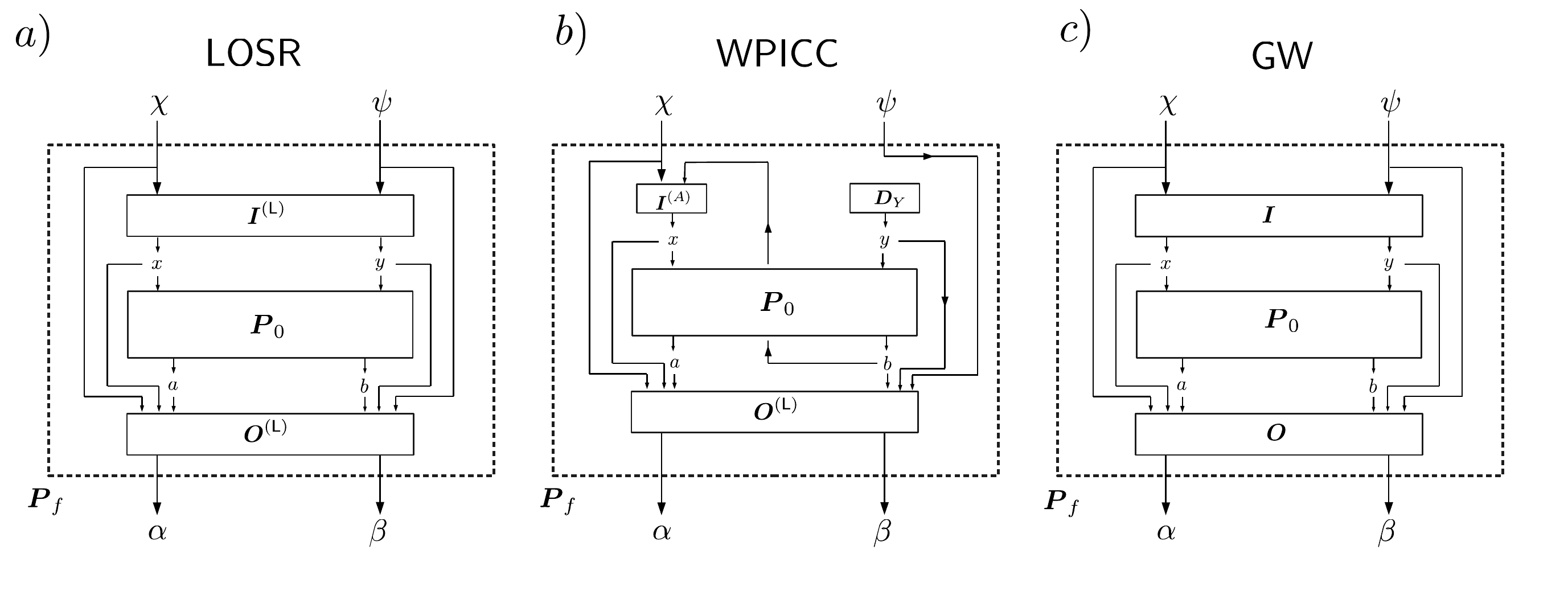}
\caption{\textbf{Circuit representation of the three main types of Bell-box wirings}. Panel $a$) shows a generic $\mathsf{LOSR}$ wiring. There, the inputs of the final box $\QQQ$ are processed by an input box with local behavior $\boldsymbol{I}^{(L)}$, whose outputs are input to the initial box $\PPP$. The outputs produced by $\PPP$ are, in turn, processed by an output box with local behavior $\boldsymbol{O}^{(L)}$, which has as inputs all previously generated dits, without any exchange of dits between Alice and Bob's sides. The outputs of $\boldsymbol{O}^{(L)}$ are the final outputs. Panel $b$) shows an example of the class $\mathsf{WPICC}$, which allows for communication between the users  provided it does not carry any information about the final inputs. In the example, Bob measures his initial local box before his final input is decided. He chooses his initial input with a single-partite box $\boldsymbol{D}_Y$ without inputs. The output of his initial box is sent to Alice, who uses it, together with her final input, to choose her initial input according to a single-partite behavior $\boldsymbol{I}^{(A)}$. The remaining dit processing is the same as in a $\mathsf{LOSR}$ wiring. Finally, in panel $c$), a generic $\mathsf{GW}$ wiring is displayed. The circuitry there is analogous to that of $\mathsf{LOSR}$ wirings, except that the input and output boxes are governed by generic nonlocal behaviors $\boldsymbol{I}$ and $\boldsymbol{O}$, respectively, not restricted to $\mathsf{L}$. In fact, $\boldsymbol{I}$ and $\boldsymbol{O}$ need in general not even be in $\mathsf{NS}$. 
Important sub-classes of $\mathsf{GW}$ are the no-signaling and quantum wirings, obtained when $\boldsymbol{I}$ and $\boldsymbol{O}$ are restricted to $\mathsf{NS}$ and $\mathsf{Q}$, respectively.
While $\mathsf{LOSR}$ and $\mathsf{WPICC}$ are classes of nonlocality-free wirings, $\mathsf{GW}$ can create Bell nonlocality, i.e., it can map local behaviors into nonlocal ones. }
\label{fig:1}
\end{figure*}

\section{Nonlocality as an operational resource}
\label{sec:Op_Framework}
Here, we focus on nonlocality as an operational resource for information-theoretic tasks. This is formalised by so-called resource theories \cite{Gallego12,deVicente14,Horodecki15,Gallego15}. Resource theories are composed of three main elements: {\it i)} \emph{mathematical objects} describing the system under scrutiny; \emph{ii)} a particular property of such objects considered the valuable \emph{resource}; and \emph{iii)} a class of \emph{free operations} for the resource, i.e. physical transformations fulfilling the essential requirement of mapping all free objects (i.e., those without the resource) into free objects. For Bell nonlocality, these three components are: 
\begin{enumerate}[{\it i)}]
\item behaviors $\boldsymbol{P}\in\mathsf{NS}$;
\item nonlocality, i.e., that $\boldsymbol{P}\notin \mathsf{L}$;
\item physical transformations under which $\mathsf{L}$ (the set of free objects) is closed, i.e., that do not create nonlocality.
\end{enumerate}
Importantly, the requirement that the free operations leave $\mathsf{L}$ invariant is necessary but not sufficient to specify the concrete class of free operations. Typically, this specification is made on the basis of the physical restrictions native of the scenario where nonlocality serves as a resource. Consequently, in general, there can be multiple classes of free operations, and, therefore, of resource theories, for the same resource, as we discuss next.

\subsection{Two resource theories of Bell nonlocality}
\label{subsec:wirings}
We restrict throughout to the paradigm of linear maps from behaviors into behaviors. Each such transformation can be physically  realised by wiring inputs and outputs of the initial black-box measurement devices with the inputs and outputs of other black boxes \cite{Short, Barrett}. Hence, from now on, we refer to any linear black-box transformation as a \emph{wiring}. 

Within a resource theory of nonlocality \cite{deVicente14,Horodecki15},  a wiring $\mathcal{W}$ is a free operation for  nonlocality if 
\begin{align}
\label{eq:free_ops}
\QQQ=\mathcal{W}\left(\PPP\right)\in\mathsf{L},\ \forall\ \boldsymbol{P}\in\mathsf{L}.
\end{align}
$\QQQ\coloneqq\{\QQ(\alpha,\beta|\chi,\psi)\}_{\alpha,\beta\in[r_f],\, \chi,\psi\in[s_f]}$ represents the final behavior after the transformation $\mathcal{W}$ on an initial behavior $\PPP$ of the form \eqref{eq:LHV}, where $\QQ(\alpha,\beta|\chi,\psi)$ is the conditional probability of obtaining the output values $\alpha$ and $\beta$ given that the input values $\chi$ and $\psi$ for the final box. Note that, in full generality, we allow the cardinality of the alphabets of inputs and outputs to change (from $s$ and $r$ to $s_f$ and $r_f$, respectively). We refer to any wiring that is a free operation for nonlocality as a \emph{nonlocality-free wiring}, or, for short, simply a \emph{free wiring}. There are two non-trivial classes of free wirings known.

\subsubsection{Local operations and shared-randomness}
The first class is called \emph{local operations assisted by shared randomness}  (see, e.g., Ref. \cite{deVicente14} for a review on the topic). This class, which we denote as $\mathsf{LOSR}$, encodes the physical restriction that Alice and Bob can only process  the classical information available \emph{locally}, without any communication between them. It is composed of all wirings $\mathcal{W}_\mathsf{LOSR}$ explicitly parametrised by  $\QQQ=\mathcal{W}_\mathsf{LOSR}\left(\PPP\right)$, with 
\begin{align}
\label{eq:LOSR}
\nonumber
P_\text{f}(\alpha,\beta\vert  \chi,\psi):=
\sum_{a,b,x,y}&O^{(\mathsf{L})}(\alpha,\beta\vert a,b,x,y,\chi,\psi)\times\\
&\PP(a,b\vert x,y)\times I^{(\mathsf{L})}(x,y\vert \chi,\psi),
\end{align}
where $\boldsymbol{I}^{(\mathsf{L})}$ and $\boldsymbol{O}^{(\mathsf{L})}$ are arbitrary boxes in $\mathsf{L}$. That is, the input and output dits of the initial  box $\PPP$ are processed (wired) locally, as sketched in Fig. \ref{fig:1} a), with (well-normalised) input and output behaviors $\boldsymbol{I}^{(\mathsf{L})}$ and $\boldsymbol{O}^{(\mathsf{L})}$ that admit both a LHV model, to produce the input and output dits of the final box $\QQQ$. It is known facts that if $\PPP\in\mathsf{L}$ then $\mathcal{W}_\mathsf{LOSR}(\PPP)\in\mathsf{L}$ and if $\PPP\in\mathsf{NS}$ then $\mathcal{W}_\mathsf{LOSR}(\PPP)\in\mathsf{NS}$. Examples of wirings in $\mathsf{LOSR}$ are local relabelings of inputs or outputs and mixing with a local behavior \cite{deVicente14}.
%

\subsubsection{Wirings and prior to inputs classical communication}

The second class is commonly known as \emph{wirings and prior-to-input classical communication} \cite{comment1}. This is the class $\mathsf{WPICC}$ of all wirings $\mathcal{W}_\mathsf{WPICC}$ operationally defined by the following two-stage sequence (see Fig. \ref{fig:1} b) \cite{Gallego12}.
\begin{enumerate}
\item \emph{Preparation phase:} Alice and Bob are allowed to use the initial box, i.e., to choose $x$ and/or $y$, generating $a$ and/or $b$, respectively, and to communicate $x$, $y$, $a$, $b$, or any other random bit of their choice, before the final inputs $\chi$ and $\psi$ are chosen.
\item \emph{Measurement phase:} Once $\chi$ and $\psi$ are chosen, Alice and Bob apply a generic  $\mathsf{LOSR}$ wiring. 
\end{enumerate}
This sequence unambiguously defines the class $\mathsf{WPICC}$. In App. \ref{sec:Par_WPICC}, we provide an explicit parametrisation of generic wirings in the class.
The analytic expression obtained is somewhat cumbersome due to the many different options that branch off during the preparation phase, but it can be written in a simplified form as
\begin{equation}
\label{eq:finalWPICC}
\mathcal{W}_\mathsf{WPICC}(\PPP)\coloneqq p\,L(\PPP)+(1-p)\,\mathcal{W}_\mathsf{LOSR}(\PPP),
\end{equation}
where $0\leq p\leq 1$, $\mathcal{W}_\mathsf{LOSR}\in\mathsf{LOSR}$, and $L$ is a linear map from $\mathsf{NS}$ to $\mathsf{L}$, i.e., $L(\PPP)\in\mathsf{L}$ for all $\PPP\in\mathsf{NS}$. From Eq. \eqref{eq:finalWPICC}, one immediately sees that if $\PPP\in\mathsf{L}$ then $\mathcal{W}_\mathsf{WPICC}(\PPP)\in\mathsf{L}$ and if $\PPP\in\mathsf{NS}$ then $\mathcal{W}_\mathsf{WPICC}(\PPP)\in\mathsf{NS}$.  Besides, it clearly holds that 
\begin{equation}
\label{eq:LOSR_subset_WPICC}
\mathsf{LOSR}\subseteq\mathsf{WPICC}.
\end{equation}
However, as we show in Corolary \ref{Col:non_inclu} in Sec. \ref{sec:WPICC_out_of_GW}, the reciprocal turns out not to be true.

As a final but important remark, we note that, if $\PPP\notin \mathsf{NS}$, causal loops can arise due to the preparation phase. For instance, in the second panel of Fig. \ref{fig:1}, the initial output $b$ is used to choose the initial input $x$. This can clearly introduce a causal loop if $\PPP$ is signaling from Alice to Bob. If, on the contrary, $\PPP\in \mathsf{NS}$, such problems are avoided and the preparation phase is consistent. Hence, throughout, we restrict the domain of $\mathsf{WPICC}$ wirings to the set $ \mathsf{NS}$ of no-signaling  behaviors. See  App. \ref{sec:Par_WPICC} for more details about  $\mathsf{WPICC}$.

\subsection{Nonlocal wirings}
Next, we consider a third class of wirings, which we call \emph{global wirings} and denote by $\mathsf{GW}$. $\mathsf{GW}$ is not a class of free operations for Bell nonlocality, but it is relevant to the results we discuss below. 
The class is composed of all the wirings $\mathcal{W}_\mathsf{GW}$ that act globally on the input and output dits, without any restriction of locality or even no-signaling. They process $\PPP$ as an effective single-partite distribution with the inputs $(x,y)$ and outputs $(a,b)$ treated as higher-dimensional single-partite inputs and outputs, respectively. 
They are defined as the $\mathsf{LOSR}$ wirings but with generic (instead of local) boxes wired to the input and output dits of the initial box, explicitly parametrised by  $\QQQ=\mathcal{W}_\mathsf{GW}\left(\PPP\right)$, with 
\begin{align}
\label{eq:GO}
\nonumber
P_\text{f}(\alpha,\beta\vert  \chi,\psi):=
\sum_{a,b,x,y}&O(\alpha,\beta\vert a,b,x,y,\chi,\psi)\times\\
&\PP(a,b\vert x,y)\times I(x,y\vert \chi,\psi),
\end{align}
with $\boldsymbol{I}$ and $\boldsymbol{O}$ arbitrary (possibly even signaling) boxes. By construction, it clearly holds that 
\begin{equation}
\label{eq:LOSR_subset_GW}
\mathsf{LOSR} \subset \mathsf{GW}.
\end{equation}
Furthermore, for any two arbitrary behaviors $\PPP$ and $\QQQ$ there exists $\mathcal{W}_\mathsf{GW} \in \mathsf{GW}$ such that $\QQQ=\mathcal{W}_\mathsf{GW}(\PPP)$. A simple way to see this is by constructing a global wiring $\mathcal{W}_\mathsf{GW}$ that bypasses $\PPP$ and directly generates $\QQQ$, i.e., by taking $\boldsymbol{O}(\cdot,\cdot|\alpha,\beta,\chi,\psi,\cdot,\cdot)=\QQQ$ for all $\alpha$, $\beta$, $\chi$, and $\psi$. Clearly, such $\mathcal{W}_\mathsf{GW}$ can map initial local boxes to arbitrary (no-signaling as well as signaling) final boxes.
However, as we see in Sec. \ref{sec:WPICC_out_of_GW}, surprisingly, this does not imply that $\mathsf{GW}$ contains all physical wirings. 

Finally, since $\mathsf{GW}$ wirings can map no-signaling behaviors out of $\mathsf{NS}$, and since $\mathsf{WPICC}$ wirings are well-defined only on behaviors in $\mathsf{NS}$, $\mathsf{GW}$ wirings cannot in general be composed with $\mathsf{WPICC}$ wirings. This suggests to consider a fourth class: the \emph{no-signaling wirings}, which we denote by $\mathsf{NSW}$. This class comprises all global wirings for which $\boldsymbol{I},\boldsymbol{O}\in\mathsf{NS}$, i.e., it is defined also by Eq. \eqref{eq:GO} but with the restriction that the boxes with which the inputs and outputs of the initial behavior are wired are described by no-signaling  distributions. The class $\mathsf{NSW}$ is a highly relevant in a variety of physical scenarios. Nevertheless, below, we prove our results directly for the superset $\mathsf{GW}$. The validity of our results for the subset $\mathsf{NSW}$ is  automatic by inclusion. 
\section{Prior classical communication improves box distinguishability}
\label{sec:rel_ent}
In this section, we study the inclusion relationships between the different classes of wirings. We show that the set $\mathsf{WPICC}$ is strictly larger than $\mathsf{LOSR}$. Even more surprisingly, we also show that $\mathsf{WPICC}$ is not contained in $\mathsf{GW}$. Far from being a mere mathematical curiosity, we show the implications of these inclusions in the operational task of distinguishing black boxes.

\subsection{The relative entropy between behaviors as a measure of their distinguishability}
\label{sec:cond_rel_ent}
Consider two arbitrary behaviors with equal alphabets of inputs and outputs: $\boldsymbol{P}$, given by Eq. \eqref{def:Prob_dist}, and $\boldsymbol{P}'$, given by 
\begin{eqnarray}
\label{def:Prob_dist_prime}
\boldsymbol{P}'\coloneqq\{P'(a,b|x,y)\}_{a,b\in[r],\, x,y\in[s]}.
\end{eqnarray}
Imagine next that we wish the distinguish them by choosing their inputs according to a generic joint probability distribution $\boldsymbol{D}\coloneqq\{D(x,y)\}_{x,y\in[s]}$ and then comparing the resulting overall  input-output statistics, i.e.: 
\begin{subequations}
\label{eq:input_output_dists}
\begin{align}
\boldsymbol{P}\cdot \boldsymbol{D}\coloneqq&\{P(a,b|x,y)\, D(x,y)\}_{a,b\in[r],\, x,y\in[s]}\\
\intertext{and}
\boldsymbol{P}'\cdot \boldsymbol{D}\coloneqq&\{P'(a,b|x,y)\, D(x,y)\}_{a,b\in[r],\, x,y\in[s]}.
\end{align}
\end{subequations}
The distinguishability between the two probability distributions can be quantified by the \emph{relative entropy} (RE) $S$, also known as the \emph{Kullback-Leibler divergence} \cite{KullbackLeibler}. More precisely, for any two distributions $\boldsymbol{Q}\coloneqq\{Q(z)\}_{z}$ and $\boldsymbol{Q}'\coloneqq\{Q'(z)\}_{z}$, the RE
of $\boldsymbol{Q}$ with respect to $\boldsymbol{Q}'$ is defined as 
\begin{align}
\label{eq:def_KLD}
S\left(\boldsymbol{Q}\|\boldsymbol{Q}'\right)\coloneqq&\sum_{z}Q(z)\log\left(\frac{Q(z)}{Q'(z)}\right).
\end{align}
$S$ is the most-widely accepted measure of distinguishability between two probability distributions \cite{KullbackLeibler,Chernoff52,Blahut1974,Audenaert2012,Cover06}. In the asymptotic infinite-sample scenario \cite{KullbackLeibler,Chernoff52,Blahut1974}, $S\left(\boldsymbol{Q}\|\boldsymbol{Q}'\right)$ quantifies the statistical confidence that a sample $z$ generated by $\boldsymbol{Q}$ gives, on average, in favour of the hypothesis that the data have indeed been sampled from $\boldsymbol{Q}$ and against the hypothesis that the data have been produced by $\boldsymbol{Q}'$.

Thus, the RE 
\begin{align}
\label{eq:cond_RE}
\nonumber
S\left(\boldsymbol{P}\cdot \boldsymbol{D}\|\boldsymbol{P}'\cdot\boldsymbol{D}\right)\coloneqq&\\
\nonumber
\sum_{a,b\in[r],\, x,y\in[s]}P(a,b|x,y)\, D(x,y)\times \log\left(\frac{P(a,b|x,y)}{P'(a,b|x,y)}\right)=&\\
\sum_{x,y\in[s]}D(x,y) S\left(\boldsymbol{P}(\cdot,\cdot|x,y)\|\boldsymbol{P}'(\cdot,\cdot|x,y)\right)&
\end{align}
of $\boldsymbol{P}\cdot \boldsymbol{D}$ with respect to $\boldsymbol{P}'\cdot \boldsymbol{D}$ measures the average distinguishability between the outputs of $\boldsymbol{P}$ and  $\boldsymbol{P}'$ when the inputs are chosen according to the common distribution $\boldsymbol{D}$. In Eq. \eqref{eq:cond_RE}, $\boldsymbol{P}(\cdot,\cdot|x,y)$ and $\boldsymbol{P}'(\cdot,\cdot|x,y)$ stand for the probability distributions over the outputs obtained from $\boldsymbol{P}$ and $\boldsymbol{P}'$, respectively, for a fixed choice of inputs $(x,y)$.
%

This motivates a very reasonable definition for the relative entropy between behaviors, namely, by optimising the overall distinguishability $S\left(\boldsymbol{P}\cdot \boldsymbol{D}\|\boldsymbol{P}'\cdot\boldsymbol{D}\right)$ over all possible input distributions $\boldsymbol{D}$:
\begin{definition}[behavior RE] 
\label{dfn:rel_ent_b}
The relative entropy $S_{\rm{b}}\left(\boldsymbol{P}\|\boldsymbol{P}'\right)$ of $\boldsymbol{P}$ with respect to $\boldsymbol{P}'$ is defined by 
\begin{eqnarray}
\nonumber 
S_{\rm{b}}\left(\boldsymbol{P}\|\boldsymbol{P}'\right)&\coloneqq&\max_{\boldsymbol{D}}S\left(\boldsymbol{P}\cdot \boldsymbol{D}\|\boldsymbol{P}'\cdot\boldsymbol{D}\right)\\
\label{eq:relent_behavior}
&=&\max_{x,y\in[s]} S\left(\boldsymbol{P}(\cdot,\cdot|x,y)\|\boldsymbol{P}'(\cdot,\cdot|x,y)\right).
\end{eqnarray}
\end{definition}
Equality \eqref{eq:relent_behavior} follows immediately from Eq. \eqref{eq:cond_RE} and the positive semi-definiteness of $S\left(\boldsymbol{P}(\cdot,\cdot|x,y)\|\boldsymbol{P}'(\cdot,\cdot|x,y)\right)$. $S_{\rm{b}}\left(\boldsymbol{P}\|\boldsymbol{P}'\right)$ thus measures the statistical distinguishability between the output probability distributions given by the behaviors $\boldsymbol{P}$ and $\boldsymbol{P}'$ when their inputs are fixed at the optimal values that maximise the output distinguishability in question. 

The RE between probability distributions is known to be contractive -- i.e., non-increasing -- under all linear maps between probability distributions. Contractivity under physical transformations is a property of uttermost importance for any reasonable measure of distinguishability. For instance, the RE between quantum states is known to be contractive under generic completely positive maps  \cite{Hellstroem76, Holevo78}, whereas the RE between steering assemblages is known to be contractive under the free operations of steering \cite{Gallego15}. In App. \ref{sec:Proof_theor:relentmonotone}, we prove the analogous for the behavior RE under  $\mathsf{GW}$ wirings:
\begin{theorem}(Contractivity of $S_{\rm{b}}$ under $\mathsf{GW}$)
\label{theor:relentmonotone} 
Let $\boldsymbol{P},\, \boldsymbol{P}'\in\mathsf{NS}$ be any two no-signaling behaviors. Then
\begin{equation}\label{eq:relentmonotone}
S_{\rm{b}}\left(\mathcal{W}_\mathsf{GW}(\boldsymbol{P})\|\mathcal{W}_\mathsf{GW}(\boldsymbol{P}')\right) \leq S_{\rm{b}}\left(\boldsymbol{P}\|\boldsymbol{P}'\right)
\end{equation}
for all $\mathcal{W}_\mathsf{GW}\in \mathsf{GW}$.
\end{theorem}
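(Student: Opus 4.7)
My plan is to exploit two standard facts about classical relative entropy: the data processing inequality (marginalisation cannot increase RE) and the observation that when two joint distributions share the same conditional factor, that factor cancels out of the log-ratio and contributes nothing to the RE. Combining these with the $\max$-over-inputs definition of $S_{\rm{b}}$ gives the contractivity almost immediately.

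Concretely, I would fix an arbitrary pair of final inputs $(\chi,\psi)$ and denote $\boldsymbol{P}_{\rm f}(\cdot,\cdot|\chi,\psi)$ and $\boldsymbol{P}'_{\rm f}(\cdot,\cdot|\chi,\psi)$ the output marginals obtained by applying the same $\mathcal{W}_\mathsf{GW}$ to $\boldsymbol{P}$ and $\boldsymbol{P}'$ via Eq.~\eqref{eq:GO}. Next I build an auxiliary joint distribution on the six variables $(\alpha,\beta,a,b,x,y)$ by
\begin{equation*}
Q(\alpha,\beta,a,b,x,y) \coloneqq O(\alpha,\beta|a,b,x,y,\chi,\psi)\,P(a,b|x,y)\,I(x,y|\chi,\psi),
\end{equation*}
and an analogous $Q'$ with $P'$ in place of $P$, so that $\boldsymbol{P}_{\rm f}(\cdot,\cdot|\chi,\psi)$ is the $(\alpha,\beta)$-marginal of $Q$ (and similarly for $Q'$). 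Since marginalisation is a stochastic map, the data processing inequality yields $S(\boldsymbol{P}_{\rm f}(\cdot,\cdot|\chi,\psi)\|\boldsymbol{P}'_{\rm f}(\cdot,\cdot|\chi,\psi))\le S(Q\|Q')$.

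The key computation is then to observe that the output-wiring factor $O$ and the input-wiring factor $I$ are identical for $Q$ and $Q'$, so they cancel inside the logarithm. Summing $\alpha,\beta$ first (using normalisation of $O$), one finds
\begin{equation*}
S(Q\|Q')=\sum_{x,y} I(x,y|\chi,\psi)\,S\bigl(\boldsymbol{P}(\cdot,\cdot|x,y)\,\|\,\boldsymbol{P}'(\cdot,\cdot|x,y)\bigr),
\end{equation*}
which, being a convex combination of per-input relative entropies, is upper bounded by $\max_{x,y}S(\boldsymbol{P}(\cdot,\cdot|x,y)\|\boldsymbol{P}'(\cdot,\cdot|x,y))=S_{\rm{b}}(\boldsymbol{P}\|\boldsymbol{P}')$ by Definition~\ref{dfn:rel_ent_b}. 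Taking the maximum over $(\chi,\psi)$ on the left-hand side then yields \eqref{eq:relentmonotone}.

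I do not foresee a genuine obstacle: the only place where one could slip is justifying that $\mathcal{W}_\mathsf{GW}$, which may produce signalling outputs, still fits the DPI framework. But the argument only needs $\mathcal{W}_\mathsf{GW}$ to act as a fixed conditional stochastic map on the outputs for each fixed $(\chi,\psi)$, which is manifest from \eqref{eq:GO}; no-signalling of the output is never invoked, and no-signalling of the inputs $\boldsymbol{P},\boldsymbol{P}'$ is only used to ensure that $\mathcal{W}_\mathsf{GW}(\boldsymbol{P})$ and $\mathcal{W}_\mathsf{GW}(\boldsymbol{P}')$ are well-defined normalised conditional distributions on which $S_{\rm{b}}$ makes sense.
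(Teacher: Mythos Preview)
Your proof is correct and follows essentially the same route as the paper's: fix the final inputs, use the log-sum/data-processing inequality to pull the sum over $(a,b,x,y)$ outside the logarithm, cancel the shared $O$ and $I$ factors, sum out $(\alpha,\beta)$ via normalisation of $O$, and bound the resulting convex combination by its maximum. The only cosmetic difference is that you package the log-sum step as ``DPI for marginalisation'' whereas the paper writes out the inequality $\sum_i x_i \log(\sum_i x_i/\sum_i y_i)\le \sum_i x_i\log(x_i/y_i)$ explicitly.
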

\noindent Note that, by the inclusion relationship \eqref{eq:LOSR_subset_GW}, Thm. \ref{theor:relentmonotone} automatically implies that $S_{\rm{b}}$ is contractive under $\mathsf{LOSR}$ wirings. Consequently, if only $\mathsf{LOSR}$ or  $\mathsf{GW}$ wirings are considered, $S_{\rm{b}}$ can be taken as a physically reasonable measure of distinguishability between behaviors. In the next section, we show that, surprisingly, this does not hold for $\mathsf{WPICC}$ wirings.

\subsection{$\mathsf{WPICC}$ outperforms $\mathsf{GW}$ at distinguishing behaviors}
\label{sec:WPICC_out_of_GW}

The RE between behaviors proves additionally a useful tool to assess the relationship between the different classes of wirings. From the study of $S_{\rm{b}}$, we discover an unexpected inequivalence between $\mathsf{WPICC}$ and $\mathsf{GW}$. For that, we first realise the following surprising fact.
\begin{theorem}(Non-contractivity of $S_{\rm{b}}$ under $\mathsf{WPICC}$)
\label{theor:notmonotonewpicc}
There exist wirings $\mathcal{W}_\mathsf{WPICC}\in\mathsf{WPICC}$ and behaviors $\boldsymbol{P},\,\boldsymbol{P}'\in\mathsf{NS}$ such that
\begin{equation}
\label{eq:violatemonotonicity}
S_{\rm{b}} \left(\mathcal{W}_\mathsf{WPICC}(\boldsymbol{P})\|\mathcal{W}_\mathsf{WPICC}(\boldsymbol{P}')\right) >S_{\rm{b}}\left(\boldsymbol{P}\|\boldsymbol{P}'\right)
\end{equation}
\end{theorem}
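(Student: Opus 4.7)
The plan is to prove Theorem \ref{theor:notmonotonewpicc} by explicit construction of a $\mathsf{WPICC}$ wiring and a pair of no-signaling behaviors $\boldsymbol{P}, \boldsymbol{P}' \in \mathsf{NS}$ witnessing \eqref{eq:violatemonotonicity}. The physical mechanism I will exploit is the single feature that distinguishes $\mathsf{WPICC}$ from both $\mathsf{LOSR}$ and $\mathsf{GW}$: during the preparation phase, Bob may measure his side of the initial box, communicate the outcome $b$ to Alice, and Alice may then pick her input as a function of $b$. Such adaptivity of the input side on the output side is impossible in $\mathsf{GW}$ (and hence in $\mathsf{LOSR}$), where the input box is causally upstream of the outputs of $\boldsymbol{P}$; this is precisely what makes a strict increase of $S_{\rm b}$ compatible with the monotonicity under $\mathsf{GW}$ proved in Theorem \ref{theor:relentmonotone}.

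Setting $p = 1$ in \eqref{eq:finalWPICC}, I would build $L: \mathsf{NS} \to \mathsf{L}$ from the following preparation protocol: Bob inputs $y_0 = 0$, registers $b$, and sends $b$ to Alice; Alice then inputs $x = f(b)$ for a fixed function $f: [r] \to [s]$, and registers $a$; in the trivial measurement phase, both parties output their preparation data, $\alpha = a$ and $\beta = b$, for every final input $(\chi, \psi)$. The protocol is well-defined on all of $\mathsf{NS}$ because no-signaling makes Bob's marginal $P(b | y = 0)$ independent of Alice's yet-to-be-chosen input, avoiding the causal-loop issue mentioned after \eqref{eq:finalWPICC}. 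The resulting behavior is
\begin{equation*}
L(\boldsymbol{P})(\alpha, \beta | \chi, \psi) = P(\alpha, \beta | f(\beta), 0) =: Q_{\boldsymbol{P}}(\alpha, \beta),
\end{equation*}
which is manifestly in $\mathsf{L}$ since it ignores the final inputs, so $S_{\rm b}(L(\boldsymbol{P}) \| L(\boldsymbol{P}')) = S(Q_{\boldsymbol{P}} \| Q_{\boldsymbol{P}'})$.

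To produce the violation, I would take $r = s = 2$, $f(0) = 1$, $f(1) = 0$, let $\boldsymbol{P}'$ be uniform, $P'(a, b | x, y) = 1/4$, and construct $\boldsymbol{P}$ so that the most informative input for distinguishing against $\boldsymbol{P}'$ depends on $b$. Explicitly, at $y = 0$, set $P(b = 0 | y = 0) = p_0 \in (1/2, 1)$, take $P(a | b, x, 0)$ to be a point mass at $a = 0$ for $(b, x) \in \{(0, 1), (1, 0)\}$ and uniform in $a$ for $(b, x) \in \{(0, 0), (1, 1)\}$, and copy the $y = 0$ slice to $y = 1$ so that no-signaling holds trivially on both marginals. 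A direct evaluation gives $Q_{\boldsymbol{P}}$ supported on two points with masses $p_0$ and $1 - p_0$, whence
\begin{equation*}
S(Q_{\boldsymbol{P}} \| Q_{\boldsymbol{P}'}) = p_0 \log(4 p_0) + (1 - p_0) \log\bigl(4(1 - p_0)\bigr),
\end{equation*}
while a direct computation of $S\bigl(\boldsymbol{P}(\cdot, \cdot | x, y) \| \boldsymbol{P}'(\cdot, \cdot | x, y)\bigr)$ at each of the four inputs shows that the maximum is attained at the input matched to the likelier outcome of Bob, $(x, y) = (1, 0)$, and equals $p_0 \log(4 p_0) + (1 - p_0) \log\bigl(2(1 - p_0)\bigr)$. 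The excess is $(1 - p_0) \log 2 > 0$, which yields \eqref{eq:violatemonotonicity}.

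The main obstacle I anticipate is not the algebra but the verification that the ``Bob-first, adaptive-Alice'' preparation protocol is genuinely captured by the parametrisation of $\mathsf{WPICC}$ derived in Appendix \ref{sec:Par_WPICC}, so that the resulting $L$ qualifies as a legitimate linear map from $\mathsf{NS}$ to $\mathsf{L}$ in the sense of \eqref{eq:finalWPICC}. This is exactly where the no-signaling restriction on the domain of $\mathsf{WPICC}$ plays an essential role: the adaptivity of Alice's input on Bob's output is benign on $\mathsf{NS}$ but would manufacture genuine causal loops for signaling boxes, which is why the corresponding increase of $S_{\rm b}$ cannot be replicated inside $\mathsf{GW}$.
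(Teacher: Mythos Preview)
Your proposal is correct and follows essentially the same approach as the paper: both proofs construct an explicit $\mathsf{WPICC}$ wiring in which Bob measures first in the preparation phase, communicates $b$ to Alice, and Alice then chooses her initial input as a function of $b$, with a pair of no-signaling behaviors engineered so that the optimal distinguishing input for Alice depends on $b$. The only cosmetic differences are that the paper takes $s_B=1$ (you effectively do the same by making the behaviors $y$-independent), uses $x=b$ rather than $x=1-b$, and picks both $\boldsymbol{P}$ and $\boldsymbol{P}'$ as nontrivial $\epsilon$-parametrised boxes for which the behavior RE exactly doubles, whereas you take $\boldsymbol{P}'$ uniform and obtain an additive gain of $(1-p_0)\log 2$.
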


\noindent The theorem is proven in App. \ref{sec:Proof_theor:notmonotonewpicc} by explicit example construction. An immediate implication is that the class $\mathsf{WPICC}$ is not only not equivalent to $\mathsf{LOSR}$ but also it is not even contained in $\mathsf{GW}$. This follows as a corollary of Thm. \ref{theor:notmonotonewpicc} together with Thm. \ref{theor:relentmonotone}.
\begin{corollary}(Non-inclusion of $\mathsf{WPICC}$ in $\mathsf{GW}$)
\label{Col:non_inclu}
There exist wirings $\mathcal{W}_\mathsf{WPICC} \in \mathsf{WPICC}$ such that $\mathcal{W}_\mathsf{WPICC}\notin \mathsf{GW}$. That is,
\begin{equation}
\label{eq:WPICC_notsubset_GW}
\mathsf{WPICC} \not\subset\mathsf{GW}.
\end{equation}
\end{corollary}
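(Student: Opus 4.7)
The proof plan is essentially a one-line contradiction argument that bundles together the two preceding theorems. The strategy is to assume the negation and show that it would violate Theorem \ref{theor:notmonotonewpicc}.

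First I would assume, for the sake of contradiction, that $\mathsf{WPICC} \subseteq \mathsf{GW}$. Under this assumption, every wiring $\mathcal{W}_\mathsf{WPICC} \in \mathsf{WPICC}$ is, in particular, a global wiring. Then Theorem \ref{theor:relentmonotone} applies to it and yields, for all $\boldsymbol{P},\boldsymbol{P}' \in \mathsf{NS}$,
\begin{equation*}
S_{\rm{b}}\left(\mathcal{W}_\mathsf{WPICC}(\boldsymbol{P}) \| \mathcal{W}_\mathsf{WPICC}(\boldsymbol{P}')\right) \leq S_{\rm{b}}\left(\boldsymbol{P}\|\boldsymbol{P}'\right).
\end{equation*}
This uniform bound, however, directly contradicts Theorem \ref{theor:notmonotonewpicc}, which guarantees at least one triple $(\mathcal{W}_\mathsf{WPICC},\boldsymbol{P},\boldsymbol{P}')$ for which the strict reverse inequality \eqref{eq:violatemonotonicity} holds. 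Hence the assumption fails and there must exist a wiring $\mathcal{W}_\mathsf{WPICC} \in \mathsf{WPICC}$ that does not belong to $\mathsf{GW}$, which is exactly the claim $\mathsf{WPICC} \not\subset \mathsf{GW}$.

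Since all the technical work is deferred to Theorems \ref{theor:relentmonotone} and \ref{theor:notmonotonewpicc} (both of which we are allowed to invoke), there is no real obstacle for the corollary itself; the only thing worth being careful about is to phrase the contrapositive in terms of the existence of a single offending wiring rather than a set inclusion in the usual direction. In particular, one should note that the witness $\mathcal{W}_\mathsf{WPICC}$ in Theorem \ref{theor:notmonotonewpicc} is the same object whose membership in $\mathsf{GW}$ is ruled out, so no further construction is needed beyond the one already furnished in App. \ref{sec:Proof_theor:notmonotonewpicc}.
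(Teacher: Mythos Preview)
Your argument is correct and is exactly the approach the paper takes: it states that the corollary ``follows as a corollary of Thm.~\ref{theor:notmonotonewpicc} together with Thm.~\ref{theor:relentmonotone},'' which is precisely your contradiction between the contractivity of $S_{\rm b}$ under $\mathsf{GW}$ and its non-contractivity under $\mathsf{WPICC}$.
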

The corollary reveals a very unexpected feature of the internal geometry of the set of wirings, schematically depicted in Fig. \ref{fig:2}.
\begin{figure}[t!]
\centering
\includegraphics[width=.7\linewidth]{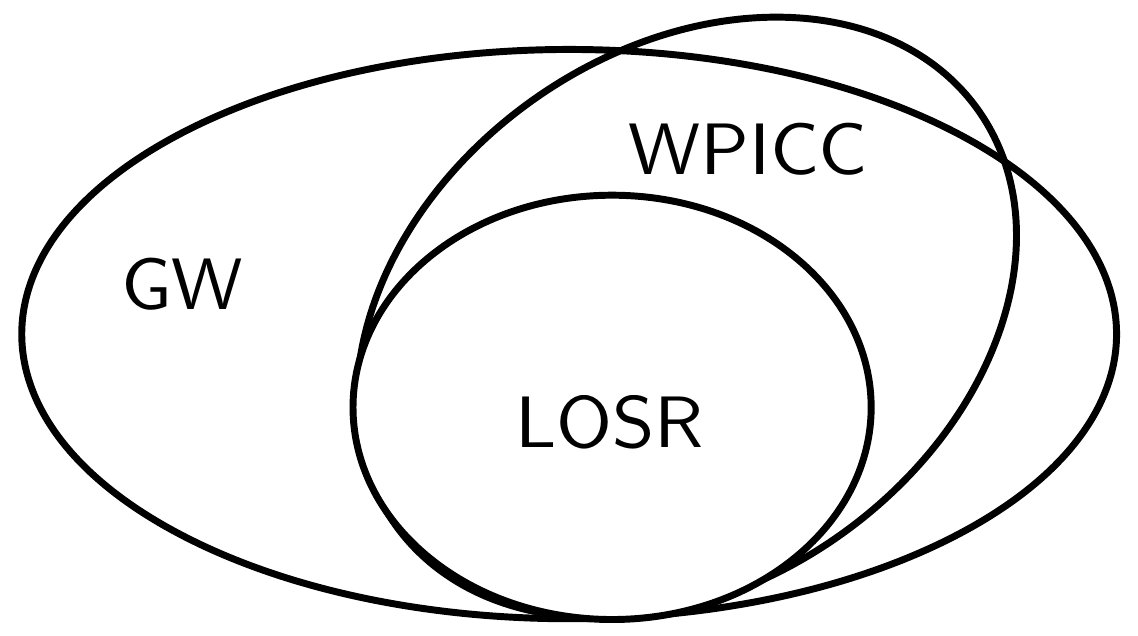}
\caption{\textbf{Inclusion relationships among the three classes of wirings}. In spite of being nonlocality free, some wirings with prior-to-input classical communication are out of the set of global wirings, which are, in general, not nonlocality free.}
\label{fig:2}
\end{figure}

Before we finish this section, let us shortly elaborate on the physical implications of Theorem \ref{theor:notmonotonewpicc} in the information-theoretic task of distinguishing two behaviors $\boldsymbol{P}$ and $\boldsymbol{P}'$.
It is clear that $S_{\rm{b}}$ satisfactorily quantifies the average distinguishability between output samples of $\boldsymbol{P}$ and $\boldsymbol{P}'$ for the optimal choice of fixed inputs. For this task, Theorem \ref{theor:notmonotonewpicc} implies that it is sometimes better to first apply a $\mathsf{WPICC}$ wiring $\mc{W}_{\mathsf{WPICC}}$ to the behaviors and only then choose the optimal inputs (the ones maximising the distinguishability between output samples of $\mc{W}_{\mathsf{WPICC}}(\boldsymbol{P})$ and $\mc{W}_{\mathsf{WPICC}}(\boldsymbol{P}')$, instead of $\boldsymbol{P}$ and $\boldsymbol{P}'$). So, one clearly concludes that $\mathsf{WPICC}$ outperforms $\mathsf{LOSR}$, and even $\mathsf{GW}$, at distinguishing Bell boxes for the (restricted) input-choice strategy in question. 
However, Theorem \ref{theor:notmonotonewpicc} also tells us,  on the other hand, that $S_{\rm{b}}$, as defined in Def. \ref{dfn:rel_ent_b}, cannot be considered a satisfactory measure of the overall distinguishability between $\boldsymbol{P}$ and $\boldsymbol{P}'$ (under generic input-choice strategies). As mentioned above, an essential requirement for a \emph{bona fide} measure of distinguishability is that it does not increase under physical transformations. As implied by Thm. \ref{theor:relentmonotone}, $S_{\rm{b}}$ fulfils this requirement when $\boldsymbol{P}$ and $\boldsymbol{P}'$ are treated as monopartite objects -- i.e., with the pair $(x,y)$ seen as a single-partite input chosen before any output is generated. However, it does not when the fact that $\boldsymbol{P}$ and $\boldsymbol{P}'$ are bipartite objects is explicitly exploited, namely, e.g., when the input to one of the boxes is chosen depending on the output of the other box.

The previous discussion is particularly relevant in scenarios where one wishes to estimate the statistical confidence that nonlocal behaviors give as nonlocality proofs \cite{vanDam05}, i.e., the average distinguishability between a given nonlocal behavior and any behavior in $\mathsf{L}$. In the literature, $S_{\rm{b}}$ has been employed as the canonical measure of that statistical strength. Nonetheless, the analysis above puts the canonical approach into question and suggests to study distinguishability measures from a resource-theoretic perspective. This is what we do in the next two sections.

\section{Bell nonlocality monotones}
\label{sec:nl_monotone}
The basic necessary condition for a function to be a satisfactory measure of Bell nonlocality from a resource-theoretic viewpoint is that it is non-increasing under the free operations for Bell nonlocality  \cite{Gallego12,deVicente14,Horodecki15}. For the free wirings, this is formalised by the following definition.
\begin{definition}($\mathsf{LOSR}$ and $\mathsf{WPICC}$ monotones)
\label{def:monot} 
A function $f:\mathsf{NS}\to\mathbb{R}_{\geq0}$ is an \emph{$\mathsf{LOSR}$ ($\mathsf{WPICC}$) monotone} if:
\begin{enumerate}[i)]
\item $f(\boldsymbol{P})=0$ for all $\boldsymbol{P}\in \mathsf{L}$, and
\item for any $\boldsymbol{P}\in \mathsf{NS}$, $f(\boldsymbol{P})\geq f\left(\mathcal{W}(\boldsymbol{P})\right)$
for all $\mathcal{W}\in\mathsf{LOSR}\ (\mathsf{WPICC})$. 
\end{enumerate}
\end{definition}

Note that, due to the inclusion relation \eqref{eq:LOSR_subset_WPICC}, any $\mathsf{WPICC}$ monotone is automatically also an $\mathsf{LOSR}$ monotone. In the following lemma we show that the converse implication also holds.
\begin{lemma}(Bell nonlocality monotones)
\label{theorem:LOSR_mon_WPICC_mon} 
Let $f:\mathsf{NS}\to\mathbb{R}_{\geq0}$ be an $\mathsf{LOSR}$ monotone, then $f$ is also a $\mathsf{WPICC}$ monotone. 
\end{lemma}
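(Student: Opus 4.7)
The plan is to leverage the explicit parametrisation of $\mathsf{WPICC}$ wirings given in Eq.~\eqref{eq:finalWPICC} to reduce $\mathsf{WPICC}$-monotonicity to $\mathsf{LOSR}$-monotonicity combined with the vanishing-on-$\mathsf{L}$ condition. Condition $i)$ in Def.~\ref{def:monot} is identical for both classes, so only the monotonicity part under $\mathcal{W}_\mathsf{WPICC}$ needs to be established.

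Fix an arbitrary $\PPP\in\mathsf{NS}$ and an arbitrary $\mc{W}_\mathsf{WPICC}\in\mathsf{WPICC}$. By Eq.~\eqref{eq:finalWPICC} I can write
\begin{equation*}
\mc{W}_\mathsf{WPICC}(\PPP) = p\,L(\PPP)+(1-p)\,\mc{W}_\mathsf{LOSR}(\PPP),
\end{equation*}
with $L(\PPP)\in\mathsf{L}$ and $\mc{W}_\mathsf{LOSR}\in\mathsf{LOSR}$. Next I would construct, for this specific $\PPP$, an auxiliary $\mathsf{LOSR}$ wiring $\mc{W}_L^{\PPP}$ that simply discards its input and deterministically outputs the fixed local behavior $L(\PPP)$: since $L(\PPP)\in\mathsf{L}$ admits an LHV decomposition, this can be implemented by choosing the output box in Eq.~\eqref{eq:LOSR} to carry that LHV model (independent of $a,b,x,y$) and the input box arbitrarily in $\mathsf{L}$. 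Thus $\mc{W}_L^{\PPP}\in\mathsf{LOSR}$.

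The core step is then to observe that the mixed wiring
\begin{equation*}
\tilde{\mc{W}} \coloneqq p\,\mc{W}_L^{\PPP}+(1-p)\,\mc{W}_\mathsf{LOSR}
\end{equation*}
is itself in $\mathsf{LOSR}$. Indeed, a convex combination of two $\mathsf{LOSR}$ wirings is implemented by adjoining a fresh shared random bit $\lambda'\in\{0,1\}$ with distribution $(p,1-p)$ to the pre-existing shared randomness, and applying $\mc{W}_L^{\PPP}$ conditional on $\lambda'=0$ and $\mc{W}_\mathsf{LOSR}$ conditional on $\lambda'=1$; both input and output local boxes remain in $\mathsf{L}$. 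By construction $\tilde{\mc{W}}(\PPP)=\mc{W}_\mathsf{WPICC}(\PPP)$. Applying $\mathsf{LOSR}$-monotonicity of $f$ to $\tilde{\mc{W}}$ then yields
\begin{equation*}
f\!\left(\mc{W}_\mathsf{WPICC}(\PPP)\right)=f\!\left(\tilde{\mc{W}}(\PPP)\right)\leq f(\PPP),
\end{equation*}
which is the $\mathsf{WPICC}$-monotonicity condition.

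The main subtlety, rather than an obstacle, is conceptual: the auxiliary wiring $\mc{W}_L^{\PPP}$ is tailored to the particular input $\PPP$ (since the LHV decomposition of $L(\PPP)$ depends on $\PPP$), so $\tilde{\mc{W}}$ does not reproduce $\mc{W}_\mathsf{WPICC}$ as a map on all of $\mathsf{NS}$. This is harmless because Def.~\ref{def:monot} only requires the monotonicity inequality point-wise in $\PPP$, and for the fixed $\PPP$ at hand $\tilde{\mc{W}}$ exactly matches $\mc{W}_\mathsf{WPICC}$. Note also that convexity of $f$ is \emph{not} invoked anywhere: the mixing that would otherwise demand convexity is absorbed directly into the shared randomness of the $\mathsf{LOSR}$ wiring, which is precisely why the converse direction (every $\mathsf{WPICC}$-monotone is $\mathsf{LOSR}$, trivial from $\mathsf{LOSR}\subseteq\mathsf{WPICC}$) does need convexity only for its non-trivial generalisations as anticipated in the introduction.
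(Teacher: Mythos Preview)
Your argument is correct and follows essentially the same route as the paper's own proof: both start from the decomposition in Eq.~\eqref{eq:finalWPICC} and observe that, for any fixed $\PPP$, the action of $\mc{W}_\mathsf{WPICC}$ can be reproduced by some $\mc{W}'\in\mathsf{LOSR}$ (namely the convex mixture of $\mc{W}_\mathsf{LOSR}$ with a constant wiring that outputs the fixed local behavior $L(\PPP)$), so that $\mathsf{LOSR}$-monotonicity applied to $\mc{W}'$ yields the claim. The paper states this fact more tersely, citing \cite{deVicente14} for the realizability of such mixtures by $\mathsf{LOSR}$ wirings, whereas you spell out the construction of $\mc{W}_L^{\PPP}$ and the convex combination via an extra shared bit explicitly; you also correctly flag that the auxiliary wiring is $\PPP$-dependent and that this is harmless given the point-wise nature of Def.~\ref{def:monot}.
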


\noindent The \lo{lemma} follows from Eq. \eqref{eq:finalWPICC} \lo{and from the fact that any  convex mixture of a given $\boldsymbol{P}$ with a local behavior (even when the latter is a function of $\boldsymbol{P}$) can always be realized by some specific $\mathsf{WPICC}$ wiring applied on $\boldsymbol{P}$ \cite{deVicente14}. This implies} that for every $\mc{W} \in \mathsf{WPICC}$ and $\boldsymbol{P} \in \mathsf{NS}$, there exists \lo{some} $\mc{W'} \in \mathsf{LOSR}$ such that $\mc{W}(\boldsymbol{P})=\mc{W'}(\boldsymbol{P})$. Hence, \lo{it follows} that
\begin{equation}
f(\mc{W}(\boldsymbol{P})) =f(\mc{W'}(\boldsymbol{P}))\leq f\boldsymbol(P)\lo{,}
\end{equation} 
\lo{where the inequality is due to the $\mathsf{LOSR}$ monotonicity of $f$. The lemma} motivates a unified definition of quantifiers of Bell nonlocality.

\begin{definition}(Bell nonlocality monotones)
\label{def:Bell_NL_monot} 
We call any $\mathsf{LOSR}$ \lo{or $\mathsf{WPICC}$ monotone} a \emph{Bell nonlocality monotone}.
\end{definition}
Next, we construct a Bell nonlocality monotone based on the behavior RE of Def. \ref{dfn:rel_ent_b}.
\subsection{The relative entropy of nonlocality}
One can define the \emph{relative entropy of Bell nonlocality} following the analogous procedure to that used for the REs of entanglement \cite{VedralPlenio} or steering  \cite{Gallego15}, e.g. That is, for any $\boldsymbol{P}\in \mathsf{NS}$, the RE of nonlocality of $\boldsymbol{P}$ is defined as
\begin{equation}
\label{def:rel_ent_nl} 
S_{\rm{nl}}(\boldsymbol{P})\coloneqq\min_{\boldsymbol{P}_\mathsf{L}\in\mathsf{L}}S_{\rm{b}}\left(\boldsymbol{P}\|\boldsymbol{P}_\mathsf{L}\right).
\end{equation}
In other words, $S_{\rm{nl}}(\boldsymbol{P})$ is the minimal behavior RE of $\boldsymbol{P}$ with respect to any local behavior
.  
The RE \eqref{def:rel_ent_nl} was originally introduced in Ref. \cite{vanDam05} as one of the variants of the so-called \emph{statistical strength of nonlocality proofs}, using game-theoretic considerations, as discussed in detail in Sec. \ref{sec:st_strength}. An advantage of defining the RE of nonlocality in terms of the RE between behaviors as in Eq. \eqref{eq:relent_behavior} is that one can prove that $S_{\rm{nl}}$ is a resource-theoretically valid measure of Bell nonlocality with the same recipe as for the REs of entanglement \cite{VedralPlenio} or steering \cite{Gallego15}. This leads to the following result, proven in App. \ref{sec:Proof_theo_mon_RE_BNL}.
\begin{theorem}(Bell monotonicity of the RE of nonlocality)
\label{theo:mon_RE_BNL}
The RE of Bell nonlocality $S_{\rm{nl}}$  is a Bell nonlocality monotone. 
\end{theorem}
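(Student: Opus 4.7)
The plan is to adapt the standard Vedral--Plenio recipe that works for the relative entropies of entanglement and of steering: reduce monotonicity of $S_{\rm nl}$ to contractivity of the underlying divergence $S_{\rm b}$, which is already in hand via Thm.~\ref{theor:relentmonotone}. The two conditions in Def.~\ref{def:monot} can be checked separately, and by Lemma~\ref{theorem:LOSR_mon_WPICC_mon} it is enough to establish $\mathsf{LOSR}$ monotonicity; the $\mathsf{WPICC}$ half (and hence the Bell nonlocality monotonicity in the sense of Def.~\ref{def:Bell_NL_monot}) then comes for free.

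The first step (vanishing on $\mathsf{L}$) is immediate: for $\boldsymbol{P}\in\mathsf{L}$ one may take $\boldsymbol{P}_{\mathsf{L}}=\boldsymbol{P}$ in the minimisation \eqref{def:rel_ent_nl}, which gives $S_{\rm nl}(\boldsymbol{P})\leq S_{\rm b}(\boldsymbol{P}\|\boldsymbol{P})=0$; non-negativity of $S_{\rm b}$ (inherited from the positivity of the Kullback--Leibler divergence) closes the equality.

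The second step is the contractivity argument. Fix $\boldsymbol{P}\in\mathsf{NS}$ and $\mathcal{W}_{\mathsf{LOSR}}\in\mathsf{LOSR}$, and let $\boldsymbol{P}_{\mathsf{L}}\in\mathsf{L}$ be arbitrary. Because $\mathsf{LOSR}$ is nonlocality-free, $\mathcal{W}_{\mathsf{LOSR}}(\boldsymbol{P}_{\mathsf{L}})\in\mathsf{L}$, so it is a valid competitor in the minimisation defining $S_{\rm nl}(\mathcal{W}_{\mathsf{LOSR}}(\boldsymbol{P}))$. Hence
\begin{equation}
S_{\rm nl}\!\left(\mathcal{W}_{\mathsf{LOSR}}(\boldsymbol{P})\right)\leq S_{\rm b}\!\left(\mathcal{W}_{\mathsf{LOSR}}(\boldsymbol{P})\,\big\|\,\mathcal{W}_{\mathsf{LOSR}}(\boldsymbol{P}_{\mathsf{L}})\right).
\end{equation}
Now invoke the inclusion $\mathsf{LOSR}\subset\mathsf{GW}$ of Eq.~\eqref{eq:LOSR_subset_GW} together with Thm.~\ref{theor:relentmonotone}, which yields
\begin{equation}
S_{\rm b}\!\left(\mathcal{W}_{\mathsf{LOSR}}(\boldsymbol{P})\,\big\|\,\mathcal{W}_{\mathsf{LOSR}}(\boldsymbol{P}_{\mathsf{L}})\right)\leq S_{\rm b}\!\left(\boldsymbol{P}\,\big\|\,\boldsymbol{P}_{\mathsf{L}}\right).
\end{equation}
Since $\boldsymbol{P}_{\mathsf{L}}\in\mathsf{L}$ was arbitrary, minimising the right-hand side over $\mathsf{L}$ gives $S_{\rm nl}(\mathcal{W}_{\mathsf{LOSR}}(\boldsymbol{P}))\leq S_{\rm nl}(\boldsymbol{P})$, which is $\mathsf{LOSR}$ monotonicity. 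An application of Lemma~\ref{theorem:LOSR_mon_WPICC_mon} promotes this to $\mathsf{WPICC}$ monotonicity, completing the proof.

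I do not foresee any genuine obstacle: the only subtlety worth flagging in the write-up is that contractivity of $S_{\rm b}$ is available because $\mathsf{LOSR}$ sits inside $\mathsf{GW}$, not because of any direct behaviour of $S_{\rm b}$ under free wirings (indeed, by Thm.~\ref{theor:notmonotonewpicc}, $S_{\rm b}$ itself is \emph{not} contractive under $\mathsf{WPICC}$, which is precisely why one cannot attempt the same direct argument through $\mathsf{WPICC}$ and must instead route the $\mathsf{WPICC}$ half through Lemma~\ref{theorem:LOSR_mon_WPICC_mon}).
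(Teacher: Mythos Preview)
Your proof is correct and follows essentially the same route as the paper: establish $\mathsf{LOSR}$ monotonicity via contractivity of $S_{\rm b}$ under $\mathsf{GW}\supset\mathsf{LOSR}$ (Thm.~\ref{theor:relentmonotone}) together with closure of $\mathsf{L}$ under $\mathsf{LOSR}$, then upgrade to $\mathsf{WPICC}$ monotonicity via Lemma~\ref{theorem:LOSR_mon_WPICC_mon}. The only difference is that the paper's appendix also proves convexity of $S_{\rm nl}$ (Lemma~\ref{lem:convexity_Snl}) before invoking Lemma~\ref{theorem:LOSR_mon_WPICC_mon}; since that lemma, as stated, does not actually require convexity, your omission of this step is harmless.
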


 
\section{Connection to the statistical strength of nonlocality proofs}
\label{sec:st_strength}
In Ref. \cite{vanDam05}, van Dam, Grunwald, and Gill (vDGG) introduced an information-theoretic measure of the statistical strength of non-local behaviors as Bell nonlocality proofs, named the \emph{statistical strength of nonlocality proofs}. For any nonlocal behavior $\boldsymbol{P}_\mathsf{NL}\in\mathsf{NS}$, this measure quantifies the minimum statistical confidence that an output sample generated by $\boldsymbol{P}_\mathsf{NL}$ gives in support of the hypothesis HNL that the outputs have indeed been generated from $\boldsymbol{P}_\mathsf{NL}$ and against the hypothesis HL that the data have been produced by any behavior $\boldsymbol{P}_\mathsf{L}\in\mathsf{L}$, when the inputs are chosen according to a probability distribution $\boldsymbol{D}$, and maximising over $\boldsymbol{D}$. vDGG proposed three variants, given by three different constraints on the allowed input distribution $\boldsymbol{D}$:
\begin{subequations}
\label{eqs:stat_strength}
\begin{align}
\label{eq:def_KLDnon1}
S_{\text{u}}(\boldsymbol{P}_\mathsf{NL})
\coloneqq&\min_{\boldsymbol{P}_\mathsf{L} \in \mathsf{L}} S\left(\boldsymbol{P}_\mathsf{NL}\cdot\boldsymbol{D}^{(u)}\|\boldsymbol{P}_\mathsf{L}\cdot\boldsymbol{D}^{(u)}\right),\\
\label{eq:def_KLDnon2}
S_{\text{uc}}(\boldsymbol{P}_\mathsf{NL})
\coloneqq&\max_{\boldsymbol{D}\in \mathsf{UC}}\, \min_{\boldsymbol{P}_\mathsf{L} \in \mathsf{L}}S\left(\boldsymbol{P}_\mathsf{NL}\cdot\boldsymbol{D}\|\boldsymbol{P}_\mathsf{L}\cdot\boldsymbol{D}\right),\\
\label{eq:def_KLDnon3}
S_{\text{c}}(\boldsymbol{P}_\mathsf{NL})
\coloneqq&\max_{\boldsymbol{D}}\, \min_{\boldsymbol{P}_\mathsf{L} \in \mathsf{L}} S\left(\boldsymbol{P}_\mathsf{NL}\cdot\boldsymbol{D}\|\boldsymbol{P}_\mathsf{L}\cdot\boldsymbol{D}\right).
\end{align}
\end{subequations}
The labels u, uc, and c, stand respectively for uniform, uncorrelated, and correlated. $\boldsymbol{D}^{(u)}$ is the uniform probability distribution of settings, with elements $I_\text{u}(x,y)=1/s^2$ for all $x,\,y\in[s]$. The maximisation in Eq. \eqref{eq:def_KLDnon2} is restricted to the subset $\mathsf{UC}\coloneqq\{\boldsymbol{D}: \boldsymbol{D}= \boldsymbol{D}_{X}\cdot\boldsymbol{D}_{Y} \}$ of all probability distributions of uncorrelated settings chosen independently with arbitrary local distributions $\boldsymbol{D}_{X}$ and $\boldsymbol{D}_{Y}$. In contrast, the maximisation in Eq. \eqref{eq:def_KLDnon3} runs over the whole the simplex of all probability distributions $\boldsymbol{D}$, including those for which the inputs are correlated. 

We note, also, that, in Ref. \cite{Horodecki15}, measures analogous to \eqref{eq:def_KLDnon1} and \eqref{eq:def_KLDnon3} have been considered for contextuality.

\subsection{Interpretation of the three variants of the strength of nonlocality proofs}
vDGG interprete the three variants in Eq. \eqref{eqs:stat_strength} in terms of a two-player game. One of the players, QUANTUM, supporter of the hypothesis HNL, wants to convince the other one, CLASSICAL, supporter of  HL, that HNL is true and HL is false. To this end, QUANTUM takes a box with nonlocal behavior $\boldsymbol{P}_\mathsf{NL}\notin\mathsf{L}$, chooses its inputs according to a distribution $\boldsymbol{D}$, and samples outputs from it. The more distinguishable the resulting input-output distribution is from any one generated by a local behavior $\boldsymbol{P}_\mathsf{L}\in\mathsf{L}$, for the same input distribution $\boldsymbol{D}$, the more evident it becomes for CLASSICAL that QUANTUM is right.
In turn, each of the variants in Eq. \eqref{eqs:stat_strength} measures the optimal (over $\boldsymbol{D}$) asymptotic statistical confidence in HNL when QUANTUM is allowed to choose the inputs uniformly,  uncorrelated or  arbitrarily.
Of all three definitions, vDGG favour Eq. \eqref{eq:def_KLDnon2}, corresponding to uncorrected inputs, as the most reasonable one. On the one hand, the authors see no physical reason why QUANTUM should restrict to uniformly chosen inputs to rightfully convince CLASSICAL. On the other one, they argue that using generic (possibly correlated) inputs makes QUANTUM's case weaker, as it could give CLASSICAL the impression that some hidden communication between Alice and Bob might be taking place.

While it is certainly true that choosing inputs uniformly is unnecessarily restrictive, we find the restriction to uncorrelated inputs unnecessary too. As is well known, correlations between Alice and Bob's inputs cannot be used by QUANTUM to fake nonlocality using a local behavior  (see, e.g., Ref. \cite{Tobias2012}. What could, in contrast, be used to cheat CLASSICAL are correlations between the inputs and the hidden variable \cite{Bacon03}. However, such correlations are totally independent of whether the inputs are uniform, uncorrelated, or correlated. Thus, we see no reason why QUANTUM should restrict to independent inputs to rightfully convince CLASSICAL. As for what game-theoretic interpretations concerns, we view Eq. \eqref{eq:def_KLDnon3}, corresponding to generic inputs, as the most reasonable definition of all three. 

On the other hand, from a resource-theoretic perspective, QUANTUM's point should be made based on nonlocality measures in the sense of Defs. \ref{def:monot} and \ref{def:Bell_NL_monot}. That is, the statistical strength of nonlocality proofs should be quantified by a Bell nonlocality monotone. Otherwise, one may run into situations where the statistical strength is ill-defined. To see this, suppose that QUANTUM takes, as nonlocality proof, a given nonlocal behavior $\boldsymbol{P}_\mathsf{NL}\notin\mathsf{L}$. According to Eqs. \eqref{eqs:stat_strength}, its statistical strength of nonlocality should be $S_{\text{i}}(\boldsymbol{P}_\mathsf{NL})$, where the subindex i can represent any of three input-choice strategies. However, before generating the outputs, QUANTUM has the freedom of modifying his nonlocality proof by having Alice and Bob apply a free wiring $\mc{W}$ to it, for instance $\mc{W}\in\mathsf{LOSR}$ or $\mc{W}\in\mathsf{WPICC}$. CLASSICAL cannot complain about this, since, by definition, $\mc{W}$ cannot create nonlocality, but the statistical strength would then  change from $S_{\text{i}}(\boldsymbol{P}_\mathsf{NL})$ to $S_{\text{i}}\left(\mc{W}(\boldsymbol{P}_\mathsf{NL})\right)$, which may be greater than $S_{\text{i}}(\boldsymbol{P}_\mathsf{NL})$. Furthermore, this process could be repeated indefinitely, obtaining every time, according to Eqs. \eqref{eqs:stat_strength}, a different (wiring-dependent) value of the statistical strength. This problem is circumvented if the statistical strength is defined so as to satisfy nonlocality monotonicity. 

\subsection{Monotonicity/non-monotonicity of the strengths of nonlocality}
We next study whether the three variants of the strength of nonlocality proofs given by Eqs. \eqref{eqs:stat_strength} are Bell nonlocality monotones. 
First of all, note that \cite{vanDam05}
\begin{equation}
\label{bnl=c}
S_{\rm{nl}}=S_{\text{c}}
\end{equation}
(see App. \ref{sec:Proof_rel_ent_stat_stre} for an explicit proof). Hence, the monotonicity of $S_{\text{c}}$ follows automatically from Theorem \ref{theo:mon_RE_BNL}.
On the other hand, for $S_{\text{u}}$ and $S_{\text{uc}}$, monotonicity is addressed by the following two lemmas, which we prove in Apps. \ref{sec:Proof_lemma_non_monot} and \ref{proof:lem_monotuc}.
\begin{theorem}(Non-monotonicity of $S_{\text{u}}$)
\label{theor:non_monot} 
$S_{\text{u}}$ \emph{is not} an $\mathsf{LOSR}$ monotone. Hence, it is, in addition, neither a $\mathsf{WPICC}$ monotone nor a Bell nonlocality monotone.
\end{theorem}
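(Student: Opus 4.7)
The plan is to exhibit a nonlocal behavior and an $\mathsf{LOSR}$ wiring under which $S_{\text{u}}$ strictly increases. The driving intuition is that $S_{\text{u}}$ hard-codes a uniform distribution over the $s^2$ input pairs: if the per-setting relative entropies $S\!\left(\boldsymbol{P}(\cdot,\cdot|x,y)\|\boldsymbol{P}_\mathsf{L}(\cdot,\cdot|x,y)\right)$ are concentrated on only a few settings, padding the alphabet with ``useless'' settings dilutes the average, whereas a wiring that discards those useless settings sharpens it. Since $\mathsf{LOSR}$ contains trivial input-restriction wirings, such padding+restriction should suffice to violate monotonicity.

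Concretely, I would fix any nonlocal behavior $\boldsymbol{P}_{\mathsf{NL}}^{(2)}\in\mathsf{NS}$ with input cardinality $s=2$ and $S_{\text{u}}(\boldsymbol{P}_{\mathsf{NL}}^{(2)})>0$ (e.g.\ a Popescu--Rohrlich box, or any sufficiently nonlocal quantum behavior). Then I would embed it into a behavior $\boldsymbol{P}'\in\mathsf{NS}$ of input cardinality $s'\geq 3$ by setting $\boldsymbol{P}'(\cdot,\cdot|x,y)=\boldsymbol{P}_{\mathsf{NL}}^{(2)}(\cdot,\cdot|x,y)$ for $x,y\in\{0,1\}$ and declaring $\boldsymbol{P}'(\cdot,\cdot|x,y)$ to be a fixed deterministic distribution (say, both outputs $0$) whenever $x\geq 2$ or $y\geq 2$. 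The wiring $\mathcal{W}\in\mathsf{LOSR}$ is the trivial input-restriction with $s_f=2$: $I^{(\mathsf{L})}(x,y|\chi,\psi)=\delta_{x,\chi}\delta_{y,\psi}$ for $\chi,\psi\in\{0,1\}$ and $O^{(\mathsf{L})}$ the identity, so that $\mathcal{W}(\boldsymbol{P}')=\boldsymbol{P}_{\mathsf{NL}}^{(2)}$.

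To compare the two values of $S_{\text{u}}$, I would upper bound $S_{\text{u}}(\boldsymbol{P}')$ by plugging into Eq.~\eqref{eq:def_KLDnon1} a specific local box: namely, the extension of a minimizer $\boldsymbol{P}_\mathsf{L}^\star$ of $S_{\text{u}}(\boldsymbol{P}_{\mathsf{NL}}^{(2)})$ that matches $\boldsymbol{P}'$ deterministically on the padded settings. This extension is manifestly in $\mathsf{L}$, and its contribution to the uniform relative entropy vanishes on those padded settings, yielding
\begin{equation*}
S_{\text{u}}(\boldsymbol{P}') \,\leq\, \frac{4}{s'^{\,2}}\, S_{\text{u}}\!\left(\boldsymbol{P}_{\mathsf{NL}}^{(2)}\right) \,<\, S_{\text{u}}\!\left(\boldsymbol{P}_{\mathsf{NL}}^{(2)}\right) \,=\, S_{\text{u}}\!\left(\mathcal{W}(\boldsymbol{P}')\right)
\end{equation*}
whenever $s'\geq 3$, which is the desired strict violation of $\mathsf{LOSR}$ monotonicity. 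The non-monotonicity of $S_{\text{u}}$ under $\mathsf{WPICC}$ and the fact that it is not a Bell nonlocality monotone then follow immediately from the inclusion \eqref{eq:LOSR_subset_WPICC} and Definition~\ref{def:Bell_NL_monot}.

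The only delicate point I anticipate is verifying that the extension used in the upper bound is genuinely local: one must check that any LHV model of $\boldsymbol{P}_\mathsf{L}^\star$ can be prolonged so that every hidden-variable branch outputs the chosen deterministic symbols on the new settings $x\geq 2$ or $y\geq 2$, which is immediate (just append deterministic local response functions, independent of $\lambda$). Everything else is a routine computation, and crucially one does not need an exact value for $S_{\text{u}}(\boldsymbol{P}_{\mathsf{NL}}^{(2)})$, only the fact that it is strictly positive, which follows from $\boldsymbol{P}_{\mathsf{NL}}^{(2)}\notin\mathsf{L}$ together with the positive-definiteness of the Kullback--Leibler divergence.
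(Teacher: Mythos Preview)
Your overall strategy---pad a nonlocal behavior with ``useless'' settings that an $\mathsf{LOSR}$ wiring can then discard---is exactly the idea the paper uses. However, your specific construction has a genuine gap, and it is not the one you anticipated. The issue is not the locality of the extended $\boldsymbol{P}_\mathsf{L}^\star$ (appending deterministic response functions does give a valid LHV model); the problem is that your padded $\boldsymbol{P}'$ is \emph{not} in $\mathsf{NS}$. If $\boldsymbol{P}'(\cdot,\cdot|x,y)=\delta_{a,0}\delta_{b,0}$ whenever $x\ge 2$ or $y\ge 2$, then for any fixed $y\in\{0,1\}$ Bob's marginal must be independent of $x$. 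Comparing $x\in\{0,1\}$ (where it equals the marginal of $\boldsymbol{P}_{\mathsf{NL}}^{(2)}$) with $x\ge 2$ (where it is $\delta_{b,0}$) forces $\boldsymbol{P}_{\mathsf{NL}}^{(2)}(b|y)=\delta_{b,0}$ for both values of $y$, and symmetrically for Alice. But a no-signaling behavior with both single-party marginals deterministic is itself the product $\delta_{a,0}\delta_{b,0}$ and hence local, contradicting $\boldsymbol{P}_{\mathsf{NL}}^{(2)}\notin\mathsf{L}$. So no nonlocal seed makes your $\boldsymbol{P}'$ no-signaling.

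The fix is straightforward and lands you on the paper's construction: pad with the uniform (white-noise) distribution rather than a deterministic one, and take $\boldsymbol{P}_{\mathsf{NL}}^{(2)}$ with uniform single-party marginals (the Tsirelson or PR box both qualify). The paper does exactly this in the $s=s_f=4$ scenario, padding the Tsirelson box with uniform noise on the extra settings; its wiring keeps the input cardinality fixed and uses the modular-reduction relabeling $(\chi,\psi)\mapsto(|\chi|_2,|\psi|_2)$ rather than your input-restriction map, but the mechanism is the same. With uniform padding the relative-entropy contribution on the padded settings again vanishes against a suitably extended local comparison box, and your dilution inequality (with the factor $4/s'^{\,2}$, equal to $1/4$ in the paper's instance) goes through.
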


\begin{theorem}(Monotonicity of $S_{\text{uc}}$)
\label{theor:monotuc} 
$S_{\text{uc}}$ \emph{is} a Bell nonlocality monotone.
\end{theorem}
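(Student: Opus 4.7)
The plan is to reduce the theorem to proving $\mathsf{LOSR}$-monotonicity, after which Lemma \ref{theorem:LOSR_mon_WPICC_mon} will upgrade it to $\mathsf{WPICC}$-monotonicity. I will rely on two preparatory observations. First, $S_{\text{uc}}$ is convex in $\boldsymbol{P}$: for each fixed $\boldsymbol{D}$, joint convexity of the relative entropy and convexity of $\mathsf{L}$ make $\min_{\boldsymbol{P}_\mathsf{L}\in\mathsf{L}} S(\boldsymbol{P}\cdot\boldsymbol{D}\|\boldsymbol{P}_\mathsf{L}\cdot\boldsymbol{D})$ convex in $\boldsymbol{P}$, and the final supremum over $\boldsymbol{D}\in\mathsf{UC}$ preserves convexity. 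Second, any $\mathcal{W}_{\mathsf{LOSR}}$ can be written as a convex combination $\sum_\lambda p(\lambda)\mathcal{W}_\lambda$ of \emph{deterministic} $\mathsf{LOSR}$ wirings, meaning wirings of the form \eqref{eq:LOSR} with product input and output boxes $\boldsymbol{I}^{(\mathsf{L})}(x,y|\chi,\psi)=I^A(x|\chi)I^B(y|\psi)$ and $\boldsymbol{O}^{(\mathsf{L})}(\alpha,\beta|a,b,x,y,\chi,\psi)=O^A(\alpha|a,x,\chi)O^B(\beta|b,y,\psi)$. This follows by absorbing the LHV decompositions of $\boldsymbol{I}^{(\mathsf{L})}$ and $\boldsymbol{O}^{(\mathsf{L})}$ into a single shared classical variable $\lambda$.

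The central step will then be to establish monotonicity under a deterministic $\mathcal{W}$. Let $\boldsymbol{D}_f=\boldsymbol{D}_X\otimes\boldsymbol{D}_Y\in\mathsf{UC}$ be any optimiser of $S_{\text{uc}}(\mathcal{W}(\boldsymbol{P}))$. Because $\mathcal{W}$ maps $\mathsf{L}$ into $\mathsf{L}$, restricting the inner minimisation to $\boldsymbol{P}_\mathsf{L}$ of the form $\mathcal{W}(\tilde{\boldsymbol{P}}_\mathsf{L})$ with $\tilde{\boldsymbol{P}}_\mathsf{L}\in\mathsf{L}$ can only raise the value, so
\begin{equation*}
S_{\text{uc}}(\mathcal{W}(\boldsymbol{P}))\le \min_{\tilde{\boldsymbol{P}}_\mathsf{L}\in\mathsf{L}} S\!\left(\mathcal{W}(\boldsymbol{P})\cdot\boldsymbol{D}_f \,\big\|\, \mathcal{W}(\tilde{\boldsymbol{P}}_\mathsf{L})\cdot\boldsymbol{D}_f\right).
\end{equation*}
I then introduce the augmented joint distribution $\Pi_{\boldsymbol{P}}(\alpha,\beta,\chi,\psi,a,b,x,y)$ obtained by sampling $(\chi,\psi)\sim\boldsymbol{D}_f$, next $(x,y)$ via the local input boxes, next $(a,b)\sim\boldsymbol{P}(\cdot,\cdot|x,y)$, and finally $(\alpha,\beta)$ via the local output boxes, together with the analogous $\Pi_{\tilde{\boldsymbol{P}}_\mathsf{L}}$. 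Its $(\alpha,\beta,\chi,\psi)$-marginal is $\mathcal{W}(\boldsymbol{P})\cdot\boldsymbol{D}_f$, whereas the absence of shared randomness inside $\mathcal{W}$ forces the $(x,y)$-marginal to factorise as $\boldsymbol{D}_i\coloneqq D_X^{P}\otimes D_Y^{P}\in\mathsf{UC}$, with $D_X^{P}(x)\coloneqq\sum_{\chi}D_X(\chi)I^A(x|\chi)$ and $D_Y^{P}$ defined analogously. Since $\Pi_{\boldsymbol{P}}/\Pi_{\tilde{\boldsymbol{P}}_\mathsf{L}}$ depends only on $(a,b,x,y)$ and equals $P(a,b|x,y)/\tilde{P}_\mathsf{L}(a,b|x,y)$, the data-processing inequality (monotonicity of relative entropy under marginalisation) then yields
\begin{equation*}
S\!\left(\mathcal{W}(\boldsymbol{P})\cdot\boldsymbol{D}_f\,\big\|\,\mathcal{W}(\tilde{\boldsymbol{P}}_\mathsf{L})\cdot\boldsymbol{D}_f\right)\le S\!\left(\boldsymbol{P}\cdot\boldsymbol{D}_i\,\big\|\,\tilde{\boldsymbol{P}}_\mathsf{L}\cdot\boldsymbol{D}_i\right).
\end{equation*}
Minimising the right-hand side over $\tilde{\boldsymbol{P}}_\mathsf{L}\in\mathsf{L}$ and bounding by the supremum over $\mathsf{UC}$ will deliver $S_{\text{uc}}(\mathcal{W}(\boldsymbol{P}))\le S_{\text{uc}}(\boldsymbol{P})$ for every deterministic $\mathcal{W}$.

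To conclude, for a general $\mathcal{W}_{\mathsf{LOSR}}=\sum_\lambda p(\lambda)\mathcal{W}_\lambda$ convexity of $S_{\text{uc}}$ combined with monotonicity under each deterministic component yields $S_{\text{uc}}(\mathcal{W}_{\mathsf{LOSR}}(\boldsymbol{P}))\le\sum_\lambda p(\lambda)S_{\text{uc}}(\mathcal{W}_\lambda(\boldsymbol{P}))\le S_{\text{uc}}(\boldsymbol{P})$, closing the $\mathsf{LOSR}$ case, and Lemma \ref{theorem:LOSR_mon_WPICC_mon} then finishes the proof for $\mathsf{WPICC}$. The step requiring the most care is the factorisation of the $(x,y)$-marginal: if shared randomness were allowed inside a single $\mathcal{W}_\lambda$, that marginal would only be a classical mixture of products---hence outside $\mathsf{UC}$---and the data-processing bound would no longer feed back into the supremum defining $S_{\text{uc}}(\boldsymbol{P})$. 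This is exactly why the shared-randomness layer must be peeled off via convexity \emph{before} applying data processing.
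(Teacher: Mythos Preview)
Your proof is correct and follows essentially the same route as the paper's: decompose any $\mathsf{LOSR}$ wiring as a convex mixture of product (what the paper calls $\mathsf{UCLOSR}$) wirings, show $S_{\text{uc}}$ is monotone under each product wiring via data processing (the paper says this step is ``analogous to Thm.~\ref{theor:relentmonotone}''), combine using convexity of $S_{\text{uc}}$, and finally invoke Lemma~\ref{theorem:LOSR_mon_WPICC_mon}. Your write-up is in fact more explicit than the paper's on the crucial point---namely that the induced input distribution $\boldsymbol{D}_i$ remains in $\mathsf{UC}$ precisely because the wiring has product input boxes---which is exactly the reason the shared-randomness layer must be stripped off first; the only quibble is terminological, since ``deterministic'' is a slight misnomer for wirings with merely \emph{product} (not necessarily Dirac) input/output boxes.
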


The fact that $S_{\text{u}}$ is not monotone under free wirings rules it out as a resource-theoretic consistent candidate for the statistical strength of nonlocality. In contrast, $S_{\text{uc}}$ is consistent with both resource theories of Bell nonlocality, the one based on $\mathsf{LOSR}$ as well as that based on $\mathsf{WPICC}$. Nonetheless, it is worth mentioning that our proof of Thm. \ref{theor:monotuc} directly relies on the fact that the hypothesis HL against one is testing is that $\boldsymbol{P}_\mathsf{NL}\notin\mathsf{L}$ (see App. \ref{proof:lem_monotuc}) and does not work against stronger hypotheses. For instance, imagine the hypothetical situation of a third player, POST-QUANTUM, supporter of the hypothesis HPQ that the data have been generated by a post-quantum no-signaling behavior $\boldsymbol{P}_{\mathsf{PQ}}\in\mathsf{NS}\setminus\mathsf{Q}$ and against the hypothesis HQ that they have been produced by any quantum behavior $\boldsymbol{P}_\mathsf{Q}\in\mathsf{Q}$. POST-QUANTUM wants to convince QUANTUM that quantum theory is violated and offers $\boldsymbol{P}_{\mathsf{PQ}}$ as a post-quantumness proof. The analogous of Eq. \eqref{eq:def_KLDnon2} relevant for such scenario would then be
\begin{equation}
\label{eq:def_KLDnon2Q}
S^{(\mathsf{Q})}_{\text{uc}}(\boldsymbol{P}_{\mathsf{PQ}})
\coloneqq \max_{\boldsymbol{D}\in \mathsf{UC}}\, \min_{\boldsymbol{P}_\mathsf{Q} \in \mathsf{Q}}S\left(\boldsymbol{P}_{\mathsf{PQ}}\cdot\boldsymbol{D}\|\boldsymbol{P}_\mathsf{Q}\cdot\boldsymbol{D}\right).
\end{equation}
Accordingly, monotonicity of should be shown under \emph{quantum wirings}, defined analogously to in Eq. \eqref{eq:LOSR} but with the inputs and outputs wired to boxes in $\mathsf{Q}$ in stead of $\mathsf{L}$.
The proof of monotonicity given in App. \ref{proof:lem_monotuc} (specifically, Lem. \ref{lem:decompositionuclosr} there) does not hold for $S^{(\mathsf{Q})}_{\text{uc}}$ under quantum wirings. We leave as an open question whether $S^{(\mathsf{Q})}_{\text{uc}}$ is a monotone under quantum wirings. 

\section{Discussion}
\label{sec:conclu}
We would like to finish with a few relevant remarks about our results and some open questions. First of all, at first blush, Thm. \ref{theor:notmonotonewpicc} (our central theorem), which proves non-monotonicity of the behavior  relative entropy $S_{\rm{nl}}$ under $\mathsf{WPICC}$ wirings, may give the impression that it could be possible to increase the distinguishability of a given nonlocal behavior from the local ones by a $\mathsf{WPICC}$ wiring. This would be directly relevant, e.g., for nonlocality certification. However, we know from Thm. \ref{theo:mon_RE_BNL}, which proves $\mathsf{WPICC}$ monotonicity of the relative entropy of nonlocality $S_{\rm{nl}}$,
 that such an increase is impossible. Interestingly, $\mathsf{WPICC}$ manages to increase $S_{\rm{b}}$ but within subspaces of bounded $S_{\rm{nl}}$. 
In fact, the two exemplary behaviors $\boldsymbol{P}$ and $\boldsymbol{P}'$ given in the proof of Thm. \ref{theor:notmonotonewpicc}, whose relative entropy increases under a $\mathsf{WPICC}$ wiring, are both local, i.e., both have zero $S_{\rm{nl}}$. While we have not seriously attempted to find  non-local behaviors, or a local and a nonlocal ones, for which $S_{\rm{b}}$ increases under $\mathsf{WPICC}$s, we would not be surprised if such examples were found. A possible search strategy for them could be to consider convex combinations of the exemplary behaviors of the proof of Thm. \ref{theor:notmonotonewpicc}, and similar $\mathsf{WPICC}$ wirings too, and numerically optimise the measurement settings to assess $S_{\rm{b}}$.

Second, as already mentioned, another consequence of Thm. \ref{theor:notmonotonewpicc} is that a \emph{bona fide} measure of the overall distinguishability between behaviors must explicitly take into account that behaviors are multi-partite objects for which some of the inputs can be chosen depending on the outputs of other users. 

Third, concerning the strength of nonlocality proofs, operational consistency demands that its definition incorporates Bell nonlocality monotonicity as a built-in property. We have shown that the variant for which the inputs are sampled from the uniform distribution does not fulfil this. The other two variants introduced in Ref. \cite{vanDam05}, with correlated and uncorrelated inputs, are both Bell nonlocality monotones, although (as an interesting side remark) it is an open question whether the variant with uncorrelated inputs would be consistent with an operational framework with quantum wirings. Either way, from the latter two variants, the fully general one allowing for correlated inputs seems the most appropriate to us. On the one hand, we find the restriction to uncorrelated inputs unnecessary and, on the other one, the variant with correlated inputs coincide with the $S_{\rm{nl}}$ as defined directly in terms of $S_{\rm{b}}$.

Fourth, as we have discussed, since $\mathsf{L}$ is closed under both $\mathsf{LOSR}$ and $\mathsf{WPICC}$, from the mathematical point of view, both classes can be taken as valid free operations for Bell nonlocality.
However, from the physical viewpoint, one class could be more appropriate than the other for some  situation. This depends on the natural physical constraints native of the specific task for which the nonlocal correlations are serving as resource. For instance, since $\mathsf{WPICC}$ allows for communication about inputs and outputs of the initial box, it is legitimate to ask wether $\mathsf{WPICC}$ can indeed be allowed in -- say -- quantum-key distribution. There, depending on the protocol, security constraints might impose $\mathsf{LOSR}$ over $\mathsf{WPICC}$ as the adequate class of harmless operations.

In conclusion, from a fundamental perspective, our work reveals unexpected features of the internal geometry of the set of wirings and characterises the connections between the two resource-theoretic paradigms for Bell nonlocality. From an applied one, in turn, our findings may be relevant to the operational task of distinguishing black-box measurement devices from a restricted set of measurement settings.
This may for instance be the case in nonlocality certification within cryptographic protocols. There, one may be interested in certifying a nonlocal target behavior from the same measurements used in the protocol, so that a potential Eavesdropper does not know which experimental runs are used for the protocol itself and which ones for the certification.

\section{Acknowledgements}

We thank J. I. de Vicente for useful discussions and for bringing \lo{lemma \ref{theorem:LOSR_mon_WPICC_mon}} to our attention. RG acknowledges funding from DFG (GA 2184/2-1) and ERC (TAQ). LA acknowledges financial support from the Brazilian agencies
CNPq, FAPERJ, CAPES, and INCT-IQ.


\appendix

\section{Parametrisation of the class $\mathsf{WPICC}$}
\label{sec:Par_WPICC}
Here we derive the explicit analytic parametrisation of a generic wiring $\mathcal{W}_\mathsf{WPICC}\in\mathsf{WPICC}$, whose simplified form is given in Eq. \eqref{eq:finalWPICC}. To this end, we use a general decision tree. During the preparation phase, the experimenters decide, using shared randomness, if Alice measures first, if Bob measures first, or if no-one measures. For the first two cases, the user that measures first communicates to the other one his/her chosen input ($x$ or $y$) as well his/her obtained output ($a$ or $b$). Then, conditioned on the two communicated dits, the other user decides whether or not to measure. This gives altogether 5 different cases, and a generic $\mathsf{WPICC}$ wiring allows, of course, for probabilistic mixings of all five branches. The branch where no-one measures during the preparation leads, by definition, to an $\mathsf{LOSR}$ overall resulting wiring. For the two branches where both users measure during the preparation (either Alice first and Bob second or vice versa), the users end up the preparation phase with the correlated random dits $a$, $x$, $b$, and $y$. One may be tempted to think that the latter two branches  also result in an overall $\mathsf{LOSR}$ wiring, but this is not the case, as we  show next.

Let us first group the five cases into three main branches and analyse the transformations experienced by $\PPP$, due to both preparation and measurement phases, along each branch: 
\begin{itemize}
\item \emph{Both Alice and Bob measure their initial boxes.}
Suppose Bob measures first. That is, he chooses $y$ according to a single-partite probability distribution $\boldsymbol{D}_Y$. His box thus outputs $b$ with probability $P_0(b|y)$, given by his marginal behavior from $\PPP$. He sends both dits $y$ and $b$ to Alice. She, in turn, chooses $x$ according to a single-partite probability distribution $\boldsymbol{D}_{X|b,y}$, which explicitly depends on $b$ and $y$, and obtains $a$ with conditional probability $P_0(a|x,b,y)$. In general, Alice can also communicate her dits $x$ and $a$ to Bob, so both parties finish the preparation phase with all four dits. Since the initial inputs are already chosen, the resulting behavior is a joint probability distribution $\PPP^{(B\rightarrow A)}$ with outputs only ($a$, $b$, $x$, and $y$), whose elements are:
\begin{eqnarray}
\nonumber 
\PPP^{(B\rightarrow A)}(a,x,b,y)&=&D_{Y}(y)\,P_0(b|y)\, D_{X|b,y}(x)\\
\nonumber 
&\times& P_0(a|x,b,y)\\
&=&D_{Y}(y)\,P_0(a,b|x,y)\, D_{X|b,y}(x).
\end{eqnarray}
Clearly, since it has no inputs, the resulting behavior $\PPP^{(B\rightarrow A)}$ has a local-hidden variable model. 
Finally, in the measurement phase, $\PPP^{(B\rightarrow A)}$ undergoes the most generic $\mathsf{LOSR}$ wiring acting on behaviors without inputs and for which both users know each other`s initial dits. This is explicitly parametrised by $\QQQ^{(B\rightarrow A)}=\mathcal{W}^{(B\rightarrow A)}_\mathsf{LOSR}\left(\PPP^{(B\rightarrow A)}\right)$, with 
\begin{align}
\label{eq:LOSR_A_to_B}
\nonumber
P_\text{f}^{(B\rightarrow A)}(\alpha,\beta\vert  \chi,\psi)\coloneqq&
\sum_{a,b,x,y}O^{(\mathsf{L})}_{a,b,x,y}(\alpha,\beta\vert \chi,\psi)\\
\times&\,\PP^{(B\rightarrow A)}(a,b\vert x,y).
\end{align}
For each $a,b\in[r]$ and $x,y\in[s]$, $\boldsymbol{O}^{(\mathsf{L})}_{a,b,x,y}$ is a local behavior (with respect to $\alpha$ and $\beta$ as outputs and $\chi$ and $\psi$ as inputs) that depends arbitrarily on $a$, $b$, $x$, and $y$, reflecting the fact that both users know each other`s initial dits.
Clearly, since $\PPP^{(B\rightarrow A)}\in\mathsf{L}$, the final behavior $\QQQ^{(B\rightarrow A)}$ is also in $\mathsf{L}$. 
To end up with, if Alice measures first, one obtains the final behavior $\QQQ^{(A\rightarrow B)}\in\mathsf{L}$, defined analogously to $\QQQ^{(B\rightarrow A)}$.

\item \emph{Either Alice or Bob measures her/his initial box.}
Suppose it is Bob who makes the measurement. \lo{Precisely, Fig. \ref{fig:1} b) represents an example of this situation}. In this case, Bob's actions are the same as in the previous branch. Alice, in contrast, does not  measure her device until the measurement phase, but she holds a copy of Bob`s dits $b$ and $y$. The resulting behavior $\PPP^{(B)}$ has a single input ($x$, on Alice`s side) and its elements are:
\begin{align}
P^{(B)}_0(a,b,y\vert x)=D_{Y}(y)\,P_0(a,b|x,y).
\end{align}
Clearly, since only one side has inputs, the resulting behavior $\PPP^{(B)}$ has a local-hidden variable model too. 
Finally, in the measurement phase, $\PPP^{(B)}$ undergoes the most generic $\mathsf{LOSR}$ wiring on behaviors with inputs only on Alice`s side and for which Alice knows the Bob`s initial dits. This is parametrised by $\QQQ^{(B)}=\mathcal{W}^{(B)}_\mathsf{LOSR}\left(\PPP^{(B)}\right)$, with 
\begin{align}
\label{eq:LOSR_B}
\nonumber
P_\text{f}^{(B)}(\alpha,\beta\vert  \chi,\psi)\coloneqq&
\sum_{a,b,x,y}O^{(\mathsf{L})}_{b,y}(\alpha,\beta\vert a,x,\chi,\psi)\\
\times&\,\PP^{(B)}(a,b,y\vert x)\times I^{(A)}_{b,y}(x\vert \chi).
\end{align}
For each $b\in[r]$ and $y\in[s]$, $\boldsymbol{O}^{(\mathsf{L})}_{b,y}$ is a local behavior (with respect to $\alpha$ and $\beta$ as outputs and $a$, $x$, $\chi$, and $\psi$ as inputs) and $\boldsymbol{I}^{(A)}_{b,y}$ is a single-partite behavior. Both depend arbitrarily on $b$ and $y$, reflecting the fact that Alice knows Bob`s initial dits.
Clearly, since $\PPP^{(B)}\in\mathsf{L}$, the final behavior $\QQQ^{(B)}$ is also in $\mathsf{L}$. 
To end up with, if it is instead Alice who measures, one obtains the final behavior $\QQQ^{(A)}\in\mathsf{L}$, defined analogously to $\QQQ^{(B)}$.

\item \emph{None of the parties measures in the preparation phase.}
In this case, both parties apply directly an $\mathsf{LOSR}$ wiring, leading to the final behavior $\QQQ=\mc{W}_{\mathsf{LOSR}}(\PPP)$, with $\mc{W}^{(\rm{None})}_{\mathsf{LOSR}}$ a generic wiring in $\mathsf{LOSR}$. 
\end{itemize}

As mentioned above, probabilistic mixtures of all five cases are admitted in general, leading to the final expression
\begin{widetext}
\begin{eqnarray}
\nonumber 
\mc{W}_{\text{WPICC}}(\PPP)&=&p_{A\rightarrow B}\,\mc{W}^{(A \rightarrow B)}_{\mathsf{LOSR}}(\PPP^{(A\rightarrow B)})+p_{B\rightarrow A}\,\mc{W}^{(B\rightarrow A)}_{\mathsf{LOSR}}(\PPP^{(B\rightarrow A)})\\&+&p_{A}\,
\label{final_final}
\mc{W}^{A }_{\mathsf{LOSR}}(\PPP^{A})+p_{B}\,
\mc{W}^{B}_{\mathsf{LOSR}}(\PPP^{B})+ p_\text{None}\,
\mc{W}^{(\rm{None})}(\PPP),
\end{eqnarray}
with $p_{A\rightarrow B}, p_{B\rightarrow A}, p_{A}, p_{B}, p_\text{None}\geq0$ and $p_{A\rightarrow B}+p_{B\rightarrow A}+p_{A}+p_{B}+ p_\text{None}=1$. Making, next, in Eq. \eqref{final_final}, the identifications $p\coloneqq 1-p^{(\text{None})}$ and 
\begin{eqnarray}
\label{eq:formulaL}
L(\PPP)&\coloneqq&\frac{1}{p}\left[p_{A\rightarrow B}\,\mc{W}^{(A \rightarrow B)}_{\mathsf{LOSR}}(\PPP^{(A\rightarrow B)})+p_{B\rightarrow A}\,\mc{W}^{(B\rightarrow A)}_{\mathsf{LOSR}}(\PPP^{(B\rightarrow A)})+p_{A}\,
\mc{W}^{A }_{\mathsf{LOSR}}(\PPP^{A})+p_{B}\,
\mc{W}^{B}_{\mathsf{LOSR}}(\PPP^{B})\right],
\end{eqnarray}
\end{widetext}
one arrives at Eq. \eqref{eq:finalWPICC}. Note that $L(\PPP)\in\mathsf{L}$ for all $\PPP\in\mathsf{NS}$, because each term in Eq. \eqref{eq:formulaL} is a behavior in $\mathsf{L}$.
\section{Proof of Thm. \ref{theor:relentmonotone}}
\label{sec:Proof_theor:relentmonotone}
Note first that all wirings $\mc{W}_{\mathsf{GW}}\in\mathsf{GW}$ treat the bipartite boxes $\boldsymbol{P}$ and $\boldsymbol{P}'$ as if they were monopartite boxes with single-partite inputs $i=(x,y)$ and output $o=(a,b)$. Similarly, we also make the identifications $\phi=(\chi,\psi)$ and $\gamma=(\alpha,\beta)$. With this notation, we must show that Eq. \eqref{eq:relentmonotone} holds using that $\QQQ=\mathcal{W}_\mathsf{GW}\left(\PPP\right)$, with 
\begin{align}
\label{eq:GO_new_not}
P_\text{f}(\gamma\vert\phi):=
\sum_{o,i}O(\gamma\vert o,i,\phi)\,\PP(o\vert i)\, I(i\vert \phi).
\end{align}

\begin{widetext}
Using Eqs. \eqref{eq:def_KLD} and \eqref{eq:relent_behavior}, we write
\begin{eqnarray}
\nonumber S_{\rm{b}}(\mathcal{W}_{\mathsf{GW}}(\boldsymbol{P})|\mathcal{W}_{\mathsf{GW}}(\boldsymbol{P}'))&=&\max_{\phi}\sum_{\gamma}P_f(\gamma|\phi)\log \left(\frac{P_f(\gamma|\phi)}{P'_f(\gamma|\phi)}\right)\\
\label{eq:proofmonotone1}&=&\max_{\phi}\sum_{\gamma}
\sum_{o,i}O(\gamma\vert o,i,\phi)\,\PP(o\vert i)\, I(i\vert \phi)
\log \left(\frac{\sum_{o,i}O(\gamma\vert o,i,\phi)\,\PP(o\vert i)\, I(i\vert \phi)}{\sum_{o,i}O(\gamma\vert o,i,\phi)\,\PP'(o\vert i)\, I(i\vert \phi)}\right)\\
\label{eq:proofmonotone2}&\leq &\max_{\phi}\sum_{\gamma}
\sum_{o,i}O(\gamma\vert o,i,\phi)\,\PP(o\vert i)\, I(i\vert \phi)
\log \left(\frac{O(\gamma\vert o,i,\phi)\,\PP(o\vert i)\, I(i\vert \phi)}{O(\gamma\vert o,i,\phi)\,\PP'(o\vert i)\, I(i\vert \phi)}\right)\\
\label{eq:proofmonotone3}&=&\max_{\phi}\sum_{\gamma}
\sum_{o,i}O(\gamma\vert o,i,\phi)\,\PP(o\vert i)\, I(i\vert \phi)
\log \left(\frac{\PP(o\vert i)}{\PP'(o\vert i)}\right)\\
\label{eq:proofmonotone4}&=&\max_{\phi}
\sum_{o,i}\PP(o\vert i)\, I(i\vert \phi)
\log \left(\frac{\PP(o\vert i)}{\PP'(o\vert i)}\right)\\
\label{eq:proofmonotone5}&=&\max_{\phi}\sum_{i}\mc{W}_{\text{i}}(i|\phi)\,S(\boldsymbol{P}(\cdot,i)\|\boldsymbol{P}'(\cdot,i))\\
\label{eq:proofmonotone6}&\leq &\max_{i}S(\boldsymbol{P}(\cdot,i)\|\boldsymbol{P}'(\cdot,i))\\
\label{eq:proofmonotone7}&=&S_{\rm{b}}(\boldsymbol{P}\|\boldsymbol{P}'),
\end{eqnarray}
where \eqref{eq:proofmonotone1} follows from \eqref{eq:GO_new_not}; \eqref{eq:proofmonotone2}  from the well-known property that $\sum_{i} x_i \log (\sum_i x_i /\sum_i y_i) \leq \sum_{i} x_i \log (x_i / y_i)$ if $x_i\geq0$ and $y_i \geq 0$ $\forall i$; \eqref{eq:proofmonotone3} from basic algebra; \eqref{eq:proofmonotone4} from summing over $\gamma$ and using that the probability distributions are normalized; \eqref{eq:proofmonotone5} from Eq. \eqref{eq:def_KLD}; \eqref{eq:proofmonotone6}  from the fact that the average is smaller than the largest value; and \eqref{eq:proofmonotone7} from Eq. \eqref{eq:relent_behavior}.
\end{widetext}


\section{Proof of Thm. \ref{theor:notmonotonewpicc}}
\label{sec:Proof_theor:notmonotonewpicc}
Our proof strategy consists of constructing two concrete behaviors $\PPP,\,\PPP'\in\mathsf{NS}$ and a concrete wiring $\mathcal{W}_\mathsf{WPICC} \in \text{WPICC}$ such that Eq. \eqref{eq:violatemonotonicity} is fulfilled. In fact, our construction takes place in the simple scenario of $r_A=r_B=s_A=2$ and $s_B=1$. That is, we consider a single input for Bob. Explicitly, $\PPP=\{P_0(a,b|x)\}_{a,b\in[2],\, x\in[2]}$ and $\PPP'=\{P'_0(a,b|x)\}_{a,b\in[2],\, x\in[2]}$. For $0< \epsilon < \frac{1}{2}$, we choose the components of $\PPP$ and $\PPP'$ as:
\begin{center}
\begin{tabular}{ |c || c | c |c  }
   \hline
   $P_0(a,b|x)$ & $x=0$ & $x=1$ \\ \hline                    
  $a=0,b=0$ & $\frac{1}{2} - \epsilon$ & $\frac{1}{2} - \epsilon$ \\ \hline
  $a=1,b=0$ & $\epsilon$ & $\epsilon$ \\ \hline
  $a=0,b=1$ & $\epsilon$ & $\epsilon$ \\ \hline
  $a=1,b=1$ & $\frac{1}{2} - \epsilon$ & $\frac{1}{2} - \epsilon$ \\  \hline
  \end{tabular}
  \end{center}
  
  \begin{center}
\begin{tabular}{ |c || c | c |c  }
   \hline
   $P'_0(a,b|x)$ & $x=0$ & $x=1$ \\ \hline                      
  $a=0,b=0$ & $\epsilon$ & $\frac{1}{2} - \epsilon$\\ \hline
  $a=1,b=0$ & $\frac{1}{2} - \epsilon$ & $\epsilon$ \\ \hline
  $a=0,b=1$ & $\epsilon$ & $\frac{1}{2} - \epsilon$ \\ \hline
  $a=1,b=1$ & $\frac{1}{2} - \epsilon$ & $\epsilon$ \\  \hline
  \end{tabular}
  \end{center}
  
As the reader can immediately verify, both distributions are well-normalized. In addition, $P_0(b|x=0)=P_0(b|x=1)$ and $P'_0(b|x=0)=P'_0(b|x=1)$, which implies that the behaviors are no-signaling. This, together with the fact that Bob has a single input, implies that , so that $\PPP$ and $\PPP$ are actually in $\mathsf{L}$.

Let us first give some intuition of why one expect that \eqref{eq:violatemonotonicity} can hold for these distributions. $S_{\rm{b}} (\PPP\|\PPP')=\max_{x} S\left(\PPP(\cdot,\cdot|x)\|\PPP'(\cdot,\cdot|x)\right)$ measures the distinguishability between the output distributions resulting from $\PPP$ and $\PPP'$ when the input is fixed at $x$, maximised over $x$. Note that, due to the fact that $\PPP$ is independent of $x$ and the symmetries of $\PPP'$, they are equally indistinguishable for $x=0$ and $x=1$. Let us analyse how distinguishable they are for each input value: For $x=0$, when $b=1$ the resulting distributions over $a$ are the same, whereas they are different for $b=0$ (see the first column of the tables). In turn, for $x=1$, when $b=0$ the resulting distributions over $a$ are the same, whereas they are different for $b=1$ (see the second column of the tables). 
Now, consider the following wiring $\mathcal{W}_\mathsf{WPICC} \in \text{WPICC}$: In the preparation phase, Bob presses the only button of his initial box (he has no choice of settings for this, since $s_b=1$), obtains the bit $b$ as output, and sends it to Alice. In the measurement phase, in turn, Alice receives the input $\chi$ of her final box, ignores it, and chooses the input to her initial box as $x=b$. There are then no further wirings and the final outputs are simply $\alpha=a$ and $\beta=b$. By doing this, Alice and Bob avoid the cases $(x=0,b=1)$ and $(x=1,b=0)$ for which $\PPP$ and $\PPP'$ behave the same. Heuristically, we expect that this transformation should increase the distinguishability. Let us next prove it rigorously. 

First, we re-write the right-hand side of \eqref{eq:violatemonotonicity} as
\begin{eqnarray}
\nonumber 
S_{\rm{b}}\left(\PPP\|\PPP'\right) &=& \max_{x}S\left(\PPP(\cdot,\cdot|x)\|\PPP'(\cdot,\cdot|x)\right)\\
\nonumber
 &=&S\left(\PPP(\cdot,\cdot|0)\|\PPP'(\cdot,\cdot|0)\right)\\
\nonumber
 &=&\sum_{a,\,b}P_0(a,b\vert 0)\log \left( \frac{P_0(a,b\vert 0)}{P'_0(a,b\vert 0)} \right)\\
\label{eq:S_b_initial}
 &=&\left( \frac{1}{2} -2 \epsilon \right) \log \left( \frac{\frac{1}{2}-\epsilon}{\epsilon} \right).
\end{eqnarray} 
The above-mentioned wiring is such that the number of inputs and outputs is preserved, i.e. ${r_A}_f={r_B}_f={s_A}_f=2$ and ${s_B}_f=1$. The final behaviors are in turn given by $\QQQ=\mathcal{W}_\mathsf{GW}\left(\PPP\right)$ and $\QQQ'=\mathcal{W}_\mathsf{GW}\left(\PPP'\right)$, with
\begin{subequations}
\label{eq:final_wiring}
\begin{align}
P_f(a,b\vert\chi) =&P_0(a, b\vert b)\ \forall\ a,b,\chi\in[2],\\
P'_f(a,b\vert\chi) =&P'_0(a, b\vert b)\ \forall\ a,b,\chi\in[2].
\end{align}
\end{subequations}
We emphasise that both final behaviors are independent of $\chi$, since Alice chooses the input to her initial box as $x=b$ ignoring the value of the final input $\chi$. Explicitly, the resulting components of the final behaviors are given by the following tables.
\begin{center}
\begin{tabular}{ |c || c | c |c  }
   \hline
   $\QQ(a,b|\chi)$ & $\chi=0$ & $\chi=1$ \\ \hline                    
  $a=0,b=0$ & $\frac{1}{2} - \epsilon$ & $\frac{1}{2} - \epsilon$ \\ \hline
  $a=1,b=0$ & $\epsilon$ & $\epsilon$ \\ \hline
  $a=0,b=1$ & $\epsilon$ & $\epsilon$ \\ \hline
  $a=1,b=1$ & $\frac{1}{2} - \epsilon$ & $\frac{1}{2} - \epsilon$ \\  \hline
  \end{tabular}
  \end{center}
  
  \begin{center}
\begin{tabular}{ |c || c | c |c  }
   \hline
   $\QQ'(a,b|\chi)$ & $\chi=0$ & $\chi=1$ \\ \hline                      
  $a=0,b=0$ & $\epsilon$ &$\epsilon$\\ \hline
  $a=1,b=0$ & $\frac{1}{2} - \epsilon$ & $\frac{1}{2} - \epsilon$  \\ \hline
  $a=0,b=1$ & $\frac{1}{2} - \epsilon$ &   $\frac{1}{2} - \epsilon$\\ \hline
  $a=1,b=1$ & $\epsilon$ & $\epsilon$  \\  \hline
  \end{tabular}
  \end{center}

Comparing either of the two columns of the tables (both columns are equal, since they are independent of $\chi$), it is clear that they are more distinguishable that either of the two the columns of the initial tables. Indeed, using Eqs. \eqref{eq:final_wiring}, we compute the left-hand side of \eqref{eq:violatemonotonicity}:
\begin{eqnarray}
\nonumber 
S_{\rm{b}}\left(\QQQ\|\QQQ'\right) &=& \max_{\chi}S\left(\QQQ(\cdot,\cdot|\chi)\|\QQQ'(\cdot,\cdot|\chi)\right)\\
\nonumber
 &=&S\left(\QQQ(\cdot,\cdot|0)\|\QQQ'(\cdot,\cdot|0)\right)\\
\nonumber
 &=&\sum_{a,\,b}\QQ(a,b\vert 0)\log \left( \frac{\QQ(a,b\vert 0)}{\QQ'(a,b\vert 0)} \right)\\
 \nonumber
 &=&\sum_{a,\,b}\PP(a, b\vert b)\log \left( \frac{\PP(a, b\vert b)}{\PP'(a,b\vert b)} \right)\\
 \nonumber
 &=&2\,\left( \frac{1}{2} -2 \epsilon \right) \log \left( \frac{\frac{1}{2}-\epsilon}{\epsilon} \right)\\
 &=&2\, S_{\rm{b}}\left(\PPP\|\PPP'\right),
\end{eqnarray}
with $S_{\rm{b}}\left(\PPP\|\PPP'\right)$ given by Eq. \eqref{eq:S_b_initial}. As we see, the RE between the final behaviors doubles the RE between the initial ones.
%
%

As a final remark, we note that similar examples can be found for behaviors with larger alphabets. Interestingly, there, the RE can increase unboundedly with the size of the alphabets.

\section{Proof of Thm. \ref{theo:mon_RE_BNL}}
\label{sec:Proof_theo_mon_RE_BNL}

Let us first see that $S_{\rm{nl}}$ is a $\mathsf{LOSR}$ monotone. This follows from Thm. \ref{theor:relentmonotone}, which implies that $S_{\rm{b}}$ is contractive under $\mathsf{LOSR}$ wirings, together with Def. \ref{def:monot} and property \eqref{eq:free_ops}.
Then, we show that  $S_{\rm{nl}}$ is convex, which, by virtue of Lemma \ref{theorem:LOSR_mon_WPICC_mon}, implies that  $S_{\rm{nl}}$ is also a $\mathsf{WPICC}$ monotone. Convexity of  $S_{\rm{nl}}$ follows straightforwardly from the convexity of $S_{\rm{b}}$. We show this explicitly in the following lemma.

\begin{lemma}(Convexity of $S_{\rm{nl}}$)
\label{lem:convexity_Snl} 
The relative entropy of nonlocality $S_{\rm{nl}}$ is convex. That is,
\begin{align}
\label{eq:convexitySnl}
S_{\rm{nl}}\left(\mu\, \boldsymbol{P}+(1-\mu)\boldsymbol{P}'\right)&\leq \mu\, S_{\rm{nl}}\left(\boldsymbol{P}\right)+(1-\mu)\,S_{\rm{nl}}\left(\boldsymbol{P}'\right),
\end{align}
for all $0\leq\mu\leq 1$ and all $\boldsymbol{P},\,\boldsymbol{P}'\in\mathsf{NS}$.
\end{lemma}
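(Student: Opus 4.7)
The plan is to exploit the convexity of the local set $\mathsf{L}$ together with the joint convexity of the standard Kullback–Leibler divergence, and then push this through the max over inputs that defines $S_{\rm{b}}$. The structure mirrors the standard argument used for the relative entropy of entanglement or of steering, as cited just before the statement.

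First I would pick $\boldsymbol{P}_\mathsf{L}^{\star}$ and $\boldsymbol{P}_\mathsf{L}'^{\star}$ in $\mathsf{L}$ that achieve the minima in Eq. \eqref{def:rel_ent_nl} for $\boldsymbol{P}$ and $\boldsymbol{P}'$, respectively (the minima are attained since $\mathsf{L}$ is compact and $S_{\rm{b}}$ is continuous in its second argument on its support). Since $\mathsf{L}$ is convex, the mixture $\boldsymbol{P}_\mathsf{L}^{\mu} \coloneqq \mu\, \boldsymbol{P}_\mathsf{L}^{\star} + (1-\mu)\,\boldsymbol{P}_\mathsf{L}'^{\star}$ also lies in $\mathsf{L}$, and so it is a feasible (not necessarily optimal) candidate in the minimisation defining $S_{\rm{nl}}\!\left(\mu\boldsymbol{P}+(1-\mu)\boldsymbol{P}'\right)$. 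This gives the bound
\begin{equation}
S_{\rm{nl}}\!\left(\mu\boldsymbol{P}+(1-\mu)\boldsymbol{P}'\right) \;\leq\; S_{\rm{b}}\!\left(\mu\boldsymbol{P}+(1-\mu)\boldsymbol{P}'\,\Big\|\,\boldsymbol{P}_\mathsf{L}^{\mu}\right).
\end{equation}

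Next I would establish joint convexity of $S_{\rm{b}}$ in its two arguments. Fix the pair of inputs $(x^{\star},y^{\star})$ that attains the maximum in Eq. \eqref{eq:relent_behavior} for the pair on the right-hand side above. Applying the well-known joint convexity of the standard relative entropy $S$ to the conditional output distributions at $(x^{\star},y^{\star})$ yields
\begin{align}
S_{\rm{b}}\!\left(\mu\boldsymbol{P}+(1-\mu)\boldsymbol{P}'\,\Big\|\,\boldsymbol{P}_\mathsf{L}^{\mu}\right) &\leq \mu\, S\!\left(\boldsymbol{P}(\cdot,\cdot|x^{\star},y^{\star})\,\big\|\,\boldsymbol{P}_\mathsf{L}^{\star}(\cdot,\cdot|x^{\star},y^{\star})\right) \nonumber\\
&\quad + (1-\mu)\, S\!\left(\boldsymbol{P}'(\cdot,\cdot|x^{\star},y^{\star})\,\big\|\,\boldsymbol{P}_\mathsf{L}'^{\star}(\cdot,\cdot|x^{\star},y^{\star})\right).
\end{align}

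Finally I would upper-bound each term by the corresponding maximum over inputs, i.e. by $S_{\rm{b}}(\boldsymbol{P}\|\boldsymbol{P}_\mathsf{L}^{\star})$ and $S_{\rm{b}}(\boldsymbol{P}'\|\boldsymbol{P}_\mathsf{L}'^{\star})$, which by construction equal $S_{\rm{nl}}(\boldsymbol{P})$ and $S_{\rm{nl}}(\boldsymbol{P}')$, respectively. Chaining the three inequalities delivers Eq. \eqref{eq:convexitySnl}. I do not foresee any serious obstacle: the only subtlety is that the max over $(x,y)$ in the definition of $S_{\rm{b}}$ does not respect convexity on its own, but this is handled cleanly by evaluating the joint-convexity inequality at the single maximiser $(x^{\star},y^{\star})$ of the mixed pair and then relaxing each term separately to its own maximum.
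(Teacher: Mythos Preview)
Your proposal is correct and follows essentially the same approach as the paper's own proof: pick the optimal local behaviors for $\boldsymbol{P}$ and $\boldsymbol{P}'$, use their convex combination (which lies in $\mathsf{L}$) as a feasible candidate for the mixture, apply joint convexity of the Kullback--Leibler divergence at each fixed input pair, and then relax each term to its own maximum over inputs. The only cosmetic difference is that the paper keeps the $\max_{x,y}$ symbol throughout rather than fixing the maximiser $(x^\star,y^\star)$ explicitly, but the logic is identical.
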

\begin{proof}
Let $\boldsymbol{\bar{P}}^*=\boldsymbol{\bar{P}}^*(\mu)$ be such that $S_{\rm{nl}}\left(\mu\, \boldsymbol{P}+(1-\mu)\,\boldsymbol{P}'\right)=S_{\rm{b}}\left(\mu\, \boldsymbol{P}+(1-\mu)\boldsymbol{P}'\|\boldsymbol{\bar{P}}^*\right)$, i.e., the optimal local behavior minimising Eq. \eqref{def:rel_ent_nl}. Equivalently, let $\boldsymbol{P^*}$ and $\boldsymbol{P'}^*$ be the optimal local behavior minimizing $S_{\rm{nl}}\left(\boldsymbol{P}\right)$ and $S_{\rm{nl}}\left(\boldsymbol{P}'\right)$ respectively.  Then
\begin{eqnarray}
\nonumber
&&S_{\rm{nl}}\left(\mu\, \boldsymbol{P}+(1-\mu)\boldsymbol{P}'\right)\\
\nonumber 
&\coloneqq &S_{\rm{b}}\left(\mu\, \boldsymbol{P}+(1-\mu)\,\boldsymbol{P}' \|\boldsymbol{\bar{P}}^*\right)\\
\nonumber 
&\leq &S_{\rm{b}}\left(\mu\, \boldsymbol{P}+(1-\mu)\,\boldsymbol{P}' \|\mu \boldsymbol{P}^*+ (1-\mu) \boldsymbol{P'}^*\right)\\
\nonumber 
&\leq&\max_{x,y\in[s]}\Big[\mu\, S\left(\boldsymbol{P}(\cdot,\cdot\vert x,y)\|\boldsymbol{P}^*(\cdot,\cdot\vert x,y)\right)\\
\label{eq:convexity1}
&+&(1-\mu)\,S\left(\boldsymbol{P}'(\cdot,\cdot\vert x,y)\|\boldsymbol{P'}^*(\cdot,\cdot\vert x,y)\right)\Big]\\
\label{eq:convexity2}
&\leq&\mu\, S_{\rm{b}}(\boldsymbol{P}\|\boldsymbol{P}^*)+(1-\mu)\,S_{\rm{b}}(\boldsymbol{P}'  \|\boldsymbol{P'}^*)\\
\nonumber &\eqqcolon &  \mu\, S_{\rm{nl}}\left(\boldsymbol{P}\right)+(1-\mu)\,S_{\rm{nl}}\left(\boldsymbol{P}'\right)
\end{eqnarray}
where \eqref{eq:convexity1} follows from Eq. \eqref{eq:relent_behavior} and joint-convexity of $S$ and \eqref{eq:convexity2} from  Eq. \eqref{eq:relent_behavior} and from maximising the values of $x$ and $y$ for each term independently.
\end{proof}

\section{Proof that $S_{\rm{nl}}=S_{\text{c}}$}
\label{sec:Proof_rel_ent_stat_stre}
Note that, since the simplex of generic bipartite probability distributions $\boldsymbol{I}$ is a convex set, von-Neumman minimax theorem implies that the order of the maximisation and the minimisation in Eq. \eqref{eq:def_KLDnon3} is irrelevant \cite{vanDam05}. That is, for any $\boldsymbol{P}_\mathsf{NL}\in\mathsf{NL}$,
\begin{align}
\label{eq:invert_order}
\nonumber
S_{\text{c}}(\boldsymbol{P}_\mathsf{NL})\coloneqq&\max_{\boldsymbol{I}}\, \min_{\boldsymbol{P}_\mathsf{L} \in \mathsf{L}} S\left(\boldsymbol{P}_\mathsf{NL}\cdot\boldsymbol{D}\|\boldsymbol{P}_\mathsf{L}\cdot\boldsymbol{D}\right)\\
\nonumber
=&\min_{\boldsymbol{P}_\mathsf{L} \in \mathsf{L}}\, \max_{\boldsymbol{I}} S\left(\boldsymbol{P}_\mathsf{NL}\cdot\boldsymbol{D}\|\boldsymbol{P}_\mathsf{L}\cdot\boldsymbol{D}\right)\\
=&\min_{\boldsymbol{P}_\mathsf{L} \in \mathsf{L}}S_{\rm{b}}\left(\boldsymbol{P}_\mathsf{NL}\|\boldsymbol{P}_\mathsf{L}\right),
\end{align}
where Eq. \eqref{eq:relent_behavior} has been used in the last equality. As evident from Eq. \eqref{def:rel_ent_nl}, the last term is precisely $S_{\rm{nl}}(\boldsymbol{P}_\mathsf{NL})$.
\section{Proof of Thm. \ref{theor:non_monot}}
\label{sec:Proof_lemma_non_monot}
The proof is by construction. That is, we find a concrete behavior $\PPP\in\mathsf{Q}$ and a concrete wiring $\mathcal{W}_\mathsf{LOSR}\in\mathsf{LOSR}$, such that $S_{\rm{u}}\left(\QQQ\right)>S_{\rm{u}}(\PPP)$, where $\QQQ=\mathcal{W}_\mathsf{LOSR}(\PPP)$. 

To this end, let us first consider the Bell scenario where $s=s_f=4$ and $r=r_f=2$, i.e., four inputs and two outputs for both initial and final behaviors. There, we take $\PPP$ as equal to the so-called Tsirelson box, for input values $x, y \in \{0,1\}$, and to the white-noise uniform distribution, for input values $x, y \in \{2,3\}$. That is, the components of  $\PPP$  are
\begin{center}
\begin{tabular}{ |c || c | c |c | c }
   \hline
   $P_0(a,b\vert x,y)$ & $x\times y=0$ & $x\times y=1$ & $x\times y>1$ \\ \hline                    
  $a=0,b=0$ & $\frac{p}{2}$  & $\frac{1-p}{2}$ & $\frac{1}{4}$\\ \hline
  $a=1,b=0$ & $\frac{1-p}{2}$ & $\frac{p}{2}$ & $\frac{1}{4}$\\ \hline
  $a=0,b=1$ & $\frac{1-p}{2}$ & $\frac{p}{2}$ &$\frac{1}{4}$\\ \hline
  $a=1,b=1$ & $\frac{p}{2}$ & $\frac{1-p}{2}$ &$\frac{1}{4}$\\  \hline
  \end{tabular}\, ,
  \end{center}  
where $p=\frac{1}{2}+\frac{1}{2\sqrt{2}}$. A possible physical realisation of this behavior is to have Alice and Bob perform adequate quantum measurements on a maximally entangled 2-qubit state for two inputs and simply output a random bit for the other two inputs.
%

Then, as our exemplary wiring, we consider $\mathcal{W}_\mathsf{LOSR}$ given by Eq. \eqref{eq:LOSR} with the input box chosen as
\begin{equation}
I^{(\mathsf{L})}(x,y\vert \chi,\psi) = \left\{
     \begin{array}{lr}
       \delta(x,\chi)\,\delta(y,\psi) &:  \text{if }\chi\times\psi\leq 1,\\
       \delta\left(x,|\chi|_2\right)\,\delta\left(y,|\psi|_2\right) &:  \text{if }\chi\times\psi>1,
     \end{array}
   \right.
\end{equation}
where $\delta$ stands for the Kronecker delta and $|\ |_2$ for modulo 2, and the output box chosen as $O^{(\mathsf{L})}(\alpha,\beta\vert a,b,x,y,\chi,\psi)=\delta(\alpha,a)\,\delta(\beta,b)$ for all $x$, $y$, $\chi$, and $\psi$. Applying this map to $\PPP$ gives the final behavior $\QQQ$, of elements:
\begin{center}
\begin{tabular}{ |c || c | c |}
   \hline
   $P_f(\alpha,\beta|\chi,\psi)$ & $\chi,\psi\in \mathcal{S}_1$ & $\chi,\psi \in \mathcal{S}_2$  \\ \hline                    
  $\alpha=0,\beta=0$ & $\frac{p}{2}$  & $\frac{1-p}{2}$ \\ \hline
  $\alpha=1,\beta=0$ & $\frac{1-p}{2}$ & $\frac{p}{2}$  \\ \hline
  $\alpha=0,\beta=1$ & $\frac{1-p}{2}$ & $\frac{p}{2}$ \\ \hline
  $\alpha=1,\beta=1$ & $\frac{p}{2}$ & $\frac{1-p}{2}$ \\  \hline
  \end{tabular}\, ,
  \end{center}
where the setting sets $\mathcal{S}_1\coloneqq\{(0,0), (0,1),(0,2),(0,3),$ $(1,0),(1,2),(2,0),(2,1),(2,2),(2,3),(3,0),(3,2)\}$ and $\mathcal{S}_2\coloneqq\{(1,1),(1,3),(3,1),(3,3)\}$ have been introduced. Note that 
\begin{equation}
\label{eq:final_equal_initial}
\QQQ(\cdot,\cdot|\chi,\psi)=\PPP(\cdot,\cdot|\chi,\psi)\ \forall\ \chi,\psi \text{ s. t. } \chi\times\psi\leq 1.
\end{equation}

Next, we find a lower bound to $S_{\rm{u}}(\QQQ)$ that will be seen to be greater than $S_{\rm{u}}(\PPP)$ below. To this end, we note, using Eq. \eqref{eq:def_KLDnon1}, that
\begin{eqnarray}
\nonumber
S_{\rm{u}}(\QQQ)&=&\\
\nonumber
\min_{\boldsymbol{P}_\mathsf{L}\in \mathsf{L}} \frac{1}{16} \sum_{\chi,\psi} S\left(\QQQ(\cdot,\cdot|\chi,\psi)\|\boldsymbol{P}_\mathsf{L}(\cdot,\cdot|\chi,\psi)\right)&=&\\
\nonumber
\frac{1}{16}\min_{\boldsymbol{P}_\mathsf{L}\in \mathsf{L}}\sum_{i=1}^4\sum_{(\chi,\psi) \in \mathcal{K}_i}  S\left(\QQQ(\cdot,\cdot|\chi,\psi)\|\boldsymbol{P}_\mathsf{L}(\cdot,\cdot|\chi,\psi)\right)&\geq&\\
\label{eq:example1}
 \frac{1}{16}\sum_{i=1}^4\min_{\boldsymbol{P}^{(i)}_\mathsf{L}\in \mathsf{L}}\sum_{(\chi,\psi) \in \mathcal{K}_i}  S\left(\QQQ(\cdot,\cdot|\chi,\psi)\|\boldsymbol{P}^{(i)}_\mathsf{L}(\cdot,\cdot|\chi,\psi)\right),
\end{eqnarray}
where $\mathcal{K}_1\coloneqq\{ (0,0), (0,1), (1,0),\textbf{(1,1)}\}$, $\mathcal{K}_2\coloneqq\{(0,2),(0,3),(1,2),\textbf{(1,3)}\}$, $\mathcal{K}_3\coloneqq\{(2,0),(2,1),(3,0),\textbf{(3,1)}\}$ and $\mathcal{K}_4\coloneqq\{(2,2),(2,3),(3,2),\textbf{(3,3)}\}$. The four measurement settings in bold are the ones belonging to the set $\mathcal{S}_2$ above. All other settings belong to $\mathcal{S}_1$. This, by inspection of the table for $P_f(\alpha,\beta|\chi,\psi)$ above, implies that the four terms in \eqref{eq:example1} are actually equal. Hence, 
\begin{align}
\nonumber
S_{\rm{u}}(\QQQ)&\geq& \frac{1}{4} \min_{\boldsymbol{P}_\mathsf{L}\in \mathsf{L}} \sum_{(\chi,\psi) \in \mathcal{K}_1} S\left(\QQQ(\cdot,\cdot|\chi,\psi)\|\boldsymbol{P}_\mathsf{L}(\cdot,\cdot|\chi,\psi)\right)\\
\label{eq:mintsi}
&=& \frac{1}{4} \min_{\boldsymbol{P}_\mathsf{L}\in \mathsf{L}} \sum_{\chi\times\psi\leq 1} S\left(\QQQ(\cdot,\cdot|\chi,\psi)\|\boldsymbol{P}_\mathsf{L}(\cdot,\cdot|\chi,\psi)\right).
\end{align}

Finally, we find an upper bound to $S_{\rm{u}}(\PPP)$ that is smaller than the right-hand-side of  Eq. \eqref{eq:mintsi}.  For this, it is convenient to introduce a behavior $\tilde{\boldsymbol{P}}\in \mathsf{L}$, in the restricted $r=2=s$ scenario, such that 
\begin{align}
\nonumber
\label{eq:S_U_opt_restrict}
\sum_{x\times y\leq1} S\left(\PPP(\cdot,\cdot\vert x,y)\|\tilde{\boldsymbol{P}}(\cdot,\cdot\vert x,y)\right)=&\\
\min_{\boldsymbol{P}_\mathsf{L}\in \mathsf{L}} \sum_{x\times y\leq1} S\left(\PPP(\cdot,\cdot\vert x,y)\|\boldsymbol{P}_\mathsf{L}(\cdot,\cdot\vert x,y)\right)&.
\end{align}
That is, $\tilde{\boldsymbol{P}}\in \mathsf{L}$ minimises the (uniform input) RE with respect to $\PPP$ for the restricted subset of inputs $(x,y)\in \mathcal{K}_1$. With this, we construct an educated guess $\boldsymbol{P}^*\in \mathsf{L}$ of the actual optimal local behavior 
for $\PPP$ for generic inputs, i.e., the one attaining the minimisation that defines $S_{\rm{u}}(\PPP)$. 
We define $\boldsymbol{P}^*$ so as to coincide with $\tilde{\boldsymbol{P}}$ over the restricted input set $\mathcal{K}_1$ and with the white-noise uniform distribution over the other inputs. That is,
\begin{equation}
P^*(a,b\vert x,y)\coloneqq \left\{
     \begin{array}{lr}
      \tilde{P}(a,b\vert x,y) &:  \text{if }x\times y\leq 1,\\
       1/4 &:  \text{if }x\times y>1,
     \end{array}
   \right.
\end{equation}
for all $a,b\in[2]$. Then, it holds that
\begin{eqnarray}
\label{eq:eqxe0}
\nonumber
&&S_{\rm{u}}(\PPP) \coloneqq\frac{1}{16}  \min_{\boldsymbol{P}_\mathsf{L}\in \mathsf{L}} \sum_{x,y} S\left(\PPP(\cdot,\cdot\vert x,y)\|\boldsymbol{P}_\mathsf{L}(\cdot,\cdot\vert x,y)\right)\\
\nonumber 
&\leq& \frac{1}{16} \sum_{x,y} S\left(\PPP(\cdot,\cdot\vert x,y)\|\boldsymbol{P}^*(\cdot,\cdot\vert x,y)\right)\\
\nonumber 
&=& \frac{1}{16} \Bigg[ \sum_{x\times y\leq 1} S\left(\PPP(\cdot,\cdot\vert x,y)\|\boldsymbol{P}^*(\cdot,\cdot\vert x,y)\right)\\
\nonumber 
&+&\sum_{x\times y> 1}S\left(\PPP(\cdot,\cdot\vert x,y)\|\boldsymbol{P}^*(\cdot,\cdot\vert x,y)\right)\Bigg]\\
\label{eq:eqxe1}
&=&\frac{1}{16}\sum_{x\times y\leq 1} S\left(\PPP(\cdot,\cdot\vert x,y)\|\boldsymbol{P}^*(\cdot,\cdot\vert x,y)\right)\\
\label{eq:eqxe2}
&=&\frac{1}{16}\sum_{x\times y\leq 1} S\left(\PPP(\cdot,\cdot\vert x,y)\|\tilde{\boldsymbol{P}}(\cdot,\cdot\vert x,y)\right)\\
\label{eq:eqxe3}
&=&\frac{1}{16}\min_{\boldsymbol{P}_\mathsf{L}\in \mathsf{L}} \sum_{x\times y\leq1} S\left(\PPP(\cdot,\cdot\vert x,y)\|\boldsymbol{P}_\mathsf{L}(\cdot,\cdot\vert x,y)\right),
\end{eqnarray}
where 
equality \eqref{eq:eqxe1} follows from the fact that $\PPP$ and $\boldsymbol{P}^*$ coincide when $x\times y>1$ (they are both equal to the flat distribution), \eqref{eq:eqxe2} from the fact that, by construction of $\boldsymbol{P}^*$, $\boldsymbol{P}^*$ and $\tilde{\boldsymbol{P}}$ coincide over the restricted input set $\mathcal{K}_1$, and equality \eqref{eq:eqxe3} from the definition of $\tilde{\boldsymbol{P}}$ in Eq. \eqref{eq:S_U_opt_restrict}.

The last step is simply to note, using Eqs. \eqref{eq:final_equal_initial}, \eqref{eq:eqxe3}, and \eqref{eq:mintsi}, that 
\begin{equation}
S_{\rm{u}}(\QQQ)\geq 4\,S_{\rm{u}}(\PPP)>S_{\rm{u}}(\PPP),
\end{equation}
which shows the theorem's claim.

\section{Proof of Thm. \ref{theor:monotuc}}
\label{proof:lem_monotuc}
Here, we show that $S_{\text{uc}}$, as defined in Eq. \eqref{eq:def_KLDnon3}, is a Bell nonlocality monotone. To this end, we first show that it is a $\mathsf{LOSR}$ monotone and then that it is convex. By virtue of Thm. \ref{theorem:LOSR_mon_WPICC_mon} and Def. \ref{def:Bell_NL_monot}, these two facts prove the theorem.

To show monotonicity, it is convenient to introduce a subclass of $\mathsf{LOSR}$ wirings called \emph{uncorrelated local operations assisted by shared randomness}, denoted as $\mathsf{UCLOSR}\subset\mathsf{LOSR}$. It is composed of all wirings $\mathcal{W}_\mathsf{UCLOSR}$ explicitly parametrised by  $\QQQ=\mathcal{W}_\mathsf{UCLOSR}\left(\PPP\right)$, with 
\begin{align}
\label{eq:UC-LOSR}
\nonumber
P_\text{f}(\alpha,\beta\vert  \chi,\psi):=
\sum_{a,b,x,y}&O^{(\mathsf{UC})}(\alpha,\beta\vert a,b,x,y,\chi,\psi)\times\\
&\PP(a,b\vert x,y)\times I^{(\mathsf{UC})}(x,y\vert \chi,\psi),
\end{align}
where $\boldsymbol{I}^{(\mathsf{UC})}\coloneqq\boldsymbol{I}^{(A)}\cdot\boldsymbol{I}^{(B)}_\text{i}$ and $\boldsymbol{O}^{(\mathsf{UC})}\coloneqq\boldsymbol{O}^{(A)}\cdot\boldsymbol{O}^{(B)}$ are uncorrelated local boxes composed of independent local behaviors for Alice and Bob:
\begin{eqnarray}
\nonumber
\boldsymbol{I}^{(A)}&\coloneqq& \{I^{(A)}(x\vert\chi)\}_{x\in[s],\, \chi\in[s_f]},\\
\nonumber
\boldsymbol{O}^{(A)}&\coloneqq& \{O^{(A)}(\alpha\vert a,x,\chi)\}_{\alpha\in[r_f],\,a\in[r],\,x\in[s],\, \chi\in[s_f]},\\
\nonumber
\boldsymbol{I}^{(B)}&\coloneqq& \{I^{(B)}(y\vert\psi)\}_{y\in[s],\, \psi\in[s_f]},\\
\nonumber
\boldsymbol{O}^{(B)}&\coloneqq& \{O^{(B)}(\beta\vert b,\psi, y)\}_{\beta\in[r_f],\,b\in[r],\,y\in[s],\, \psi\in[S_f]}.
\end{eqnarray}
Then, the following fact is true.
\begin{lemma} ($\mathsf{UCLOSR}$ monotonicity of $S_{\rm{uc}}$)
\label{lem:uclosrmon}
Let $\boldsymbol{P}\in\mathsf{NS}$ be any  no-signaling behavior. Then
\begin{equation}
\label{eq:uclosrmon}
S_{\rm{uc}}\left(\mathcal{W}_\mathsf{UCLOSR}(\boldsymbol{P})\right) \leq S_{\rm{uc}}\left(\boldsymbol{P}\right)
\end{equation}
for all $\mc{W}_\mathsf{UCLOSR}\in\mathsf{UCLOSR}$.
\end{lemma}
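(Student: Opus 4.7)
The plan is to reduce the claim to the data-processing inequality (DPI) for classical relative entropy. Concretely, for the uncorrelated distribution $\boldsymbol{D}_f^* = \boldsymbol{D}_f^{X*}\cdot\boldsymbol{D}_f^{Y*}$ achieving the outer max in $S_{\rm{uc}}(\mc{W}_\mathsf{UCLOSR}(\boldsymbol{P}))$, I want to exhibit a classical channel $T$ and an uncorrelated input distribution $\boldsymbol{D}\in\mathsf{UC}$ such that $T(\boldsymbol{Q}\cdot\boldsymbol{D}) = \mc{W}_\mathsf{UCLOSR}(\boldsymbol{Q})\cdot\boldsymbol{D}_f^*$ for every no-signaling $\boldsymbol{Q}$. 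Once $T$ is in hand, DPI immediately lifts divergence non-increase at the level of full joint distributions over $(\alpha,\beta,\chi,\psi)$ to non-increase of $S_{\rm{uc}}$.

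The key observation is that, since $\mathsf{UCLOSR}$ factorizes both at the input and at the output stages, I can define $D_X(x) \coloneqq \sum_\chi I^{(A)}(x|\chi)\, D_f^{X*}(\chi)$ and $D_Y(y)$ analogously, obtaining $\boldsymbol{D} = \boldsymbol{D}_X\cdot\boldsymbol{D}_Y \in \mathsf{UC}$. Then I build $T$ by concatenating the Bayes-reverse single-party conditionals $R_A(\chi|x) = I^{(A)}(x|\chi)\, D_f^{X*}(\chi)/D_X(x)$ and $R_B(\psi|y) = I^{(B)}(y|\psi)\, D_f^{Y*}(\psi)/D_Y(y)$ with the output boxes $\boldsymbol{O}^{(A)}$ and $\boldsymbol{O}^{(B)}$. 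A direct computation using Eq.~\eqref{eq:UC-LOSR} then shows $T(\boldsymbol{Q}\cdot\boldsymbol{D}) = \mc{W}_\mathsf{UCLOSR}(\boldsymbol{Q})\cdot\boldsymbol{D}_f^*$, because the $D_X(x), D_Y(y)$ denominators in $R_A, R_B$ exactly cancel the marginals of $\boldsymbol{Q}\cdot\boldsymbol{D}$, leaving $D_f^{X*}(\chi)\, D_f^{Y*}(\psi)$ multiplied by the right-hand side of \eqref{eq:UC-LOSR}.

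DPI thus yields $S\bigl(\mc{W}_\mathsf{UCLOSR}(\boldsymbol{P})\cdot\boldsymbol{D}_f^* \,\|\, \mc{W}_\mathsf{UCLOSR}(\boldsymbol{P}_\mathsf{L})\cdot\boldsymbol{D}_f^*\bigr) \leq S(\boldsymbol{P}\cdot\boldsymbol{D} \,\|\, \boldsymbol{P}_\mathsf{L}\cdot\boldsymbol{D})$ for every $\boldsymbol{P}_\mathsf{L}\in\mathsf{L}$. Because $\mathsf{UCLOSR}\subset\mathsf{LOSR}$ is nonlocality-free, $\mc{W}_\mathsf{UCLOSR}(\boldsymbol{P}_\mathsf{L}) \in \mathsf{L}$ is an admissible competitor in the inner min defining $S_{\rm{uc}}(\mc{W}_\mathsf{UCLOSR}(\boldsymbol{P}))$. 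Minimizing over $\boldsymbol{P}_\mathsf{L}$ on both sides, and using that on the left $\boldsymbol{D}_f^*$ realizes the outer max whereas on the right $\boldsymbol{D}\in\mathsf{UC}$ is merely a feasible point, I obtain $S_{\rm{uc}}(\mc{W}_\mathsf{UCLOSR}(\boldsymbol{P})) \leq \min_{\boldsymbol{P}_\mathsf{L}} S(\boldsymbol{P}\cdot\boldsymbol{D} \,\|\, \boldsymbol{P}_\mathsf{L}\cdot\boldsymbol{D}) \leq S_{\rm{uc}}(\boldsymbol{P})$.

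The main subtlety to handle will be the definition of $T$ at values of $x$ or $y$ with $D_X(x) = 0$ or $D_Y(y) = 0$: the Bayes reversal is ill-defined there, but since $\boldsymbol{Q}\cdot\boldsymbol{D}$ and $\boldsymbol{P}_\mathsf{L}\cdot\boldsymbol{D}$ both vanish at such points any completion works. The most delicate conceptual step is the assertion $\boldsymbol{D}\in\mathsf{UC}$, which uses the factorization of $\boldsymbol{I}^{(\mathsf{UC})}$ in an essential way; an analogous argument would fail for a generic $\mathsf{LOSR}$ wiring with shared-randomness-correlated input boxes. This is precisely why the lemma is phrased for $\mathsf{UCLOSR}$, and suggests that the subsequent passage from $\mathsf{UCLOSR}$ monotonicity to full $\mathsf{LOSR}$ monotonicity will proceed by decomposing a generic $\mathsf{LOSR}$ wiring as a convex mixture of $\mathsf{UCLOSR}$ ones and invoking joint convexity of the relative entropy.
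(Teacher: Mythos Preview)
Your proof is correct and, at its core, coincides with the paper's intended argument. The paper simply states that the proof is ``completely analogous to the one of Thm.~\ref{theor:relentmonotone}'', which proceeds by a direct application of the log-sum inequality; your use of the data-processing inequality with an explicitly constructed Bayes-reversal channel is the same inequality in a more abstract packaging, and the decisive observation --- that the pushed-back input distribution $\boldsymbol{D}(x,y)=\sum_{\chi,\psi} I^{(\mathsf{UC})}(x,y|\chi,\psi)\,D_f^*(\chi,\psi)$ factorizes because both $\boldsymbol{I}^{(\mathsf{UC})}$ and $\boldsymbol{D}_f^*$ do --- is identical in both.
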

The proof of this lemma is completely analogous to the one of Thm. \ref{theor:relentmonotone}. 

In addition, using Eqs. \eqref{eq:LHV} and \eqref{eq:LOSR}, one immediately proves the following fact.
\begin{lemma}
\label{lem:decompositionuclosr} 
Every $\mathcal{W}_\mathsf{LOSR}\in \mathsf{LOSR}$ can be decomposed as
\begin{equation}
\mathcal{W}_\mathsf{LOSR}=\sum_{\lambda} p(\lambda) \mc{W}^{(\lambda)}_{\mathsf{UCLOSR}},
\end{equation}
where $\mc{W}^{(\lambda)}_{\mathsf{UCLOSR}}\in\mathsf{UCLOSR}$ for all $\lambda$.
\end{lemma}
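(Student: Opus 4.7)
The plan is to exploit the local-hidden-variable structure built into the definition of $\mathsf{LOSR}$. Recall that in Eq.~\eqref{eq:LOSR} the input and output boxes $\boldsymbol{I}^{(\mathsf{L})}$ and $\boldsymbol{O}^{(\mathsf{L})}$ are required to be local. By the very definition of local behavior in Eq.~\eqref{eq:LHV}, each of them admits a decomposition as a convex combination of products of independent single-party conditional distributions. My first step is thus to write
\begin{align}
I^{(\mathsf{L})}(x,y\vert\chi,\psi) &= \sum_{\mu} q_{\mu}\, I^{(A)}_{\mu}(x\vert\chi)\, I^{(B)}_{\mu}(y\vert\psi), \\
O^{(\mathsf{L})}(\alpha,\beta\vert a,b,x,y,\chi,\psi) &= \sum_{\nu} r_{\nu}\, O^{(A)}_{\nu}(\alpha\vert a,x,\chi)\, O^{(B)}_{\nu}(\beta\vert b,y,\psi),
\end{align}
where $\{q_\mu\}$ and $\{r_\nu\}$ are probability distributions on the respective hidden variables, and the conditional distributions appearing above are the factorised local-response functions of the LHV models.

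The second step is simply to substitute these two decompositions into Eq.~\eqref{eq:LOSR}, yielding
\begin{equation*}
P_\text{f}(\alpha,\beta\vert\chi,\psi) = \sum_{\mu,\nu} q_\mu r_\nu \sum_{a,b,x,y} O^{(A)}_\nu(\alpha\vert a,x,\chi) O^{(B)}_\nu(\beta\vert b,y,\psi) P_0(a,b\vert x,y) I^{(A)}_\mu(x\vert\chi) I^{(B)}_\mu(y\vert\psi).
\end{equation*}
After relabelling $\lambda\coloneqq(\mu,\nu)$ and $p(\lambda)\coloneqq q_\mu r_\nu$, the inner expression is precisely Eq.~\eqref{eq:UC-LOSR} with $\boldsymbol{I}^{(\mathsf{UC})}_{\lambda}=\boldsymbol{I}^{(A)}_{\mu}\cdot\boldsymbol{I}^{(B)}_{\mu}$ and $\boldsymbol{O}^{(\mathsf{UC})}_{\lambda}=\boldsymbol{O}^{(A)}_{\nu}\cdot\boldsymbol{O}^{(B)}_{\nu}$. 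Thus each term of the sum defines a legitimate wiring $\mc{W}^{(\lambda)}_{\mathsf{UCLOSR}}\in\mathsf{UCLOSR}$ and the full $\mc{W}_{\mathsf{LOSR}}$ is their convex combination, which is what we needed to establish.

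There is no real obstacle here: the lemma is essentially a tautology, reflecting the fact that the ``shared randomness'' in $\mathsf{LOSR}$ is allowed to correlate both the input-wiring and output-wiring boxes, while $\mathsf{UCLOSR}$ is the same class without that shared randomness. Unfolding the LHV decomposition reveals $\mathsf{LOSR}$ as the convex hull of $\mathsf{UCLOSR}$. The only point that deserves a brief remark in the write-up is that the two hidden variables $\mu$ and $\nu$ may be treated as independent: one could in principle allow them to be correlated through a joint distribution $p(\mu,\nu)$, but this gives no additional freedom since the marginals alone already exhaust all $\mathsf{LOSR}$ wirings (any joint distribution yields a wiring that can be re-expressed by enlarging $\lambda$).
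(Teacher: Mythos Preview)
Your proof is correct and is exactly the argument the paper has in mind: the paper merely states that the lemma follows ``using Eqs.~\eqref{eq:LHV} and \eqref{eq:LOSR}'', and you have spelled out precisely that substitution. Your closing remark about the independence of $\mu$ and $\nu$ is a bit of a red herring---in the parametrisation of Eq.~\eqref{eq:LOSR} the boxes $\boldsymbol{I}^{(\mathsf{L})}$ and $\boldsymbol{O}^{(\mathsf{L})}$ are fixed separately, so their LHV decompositions are automatically independent---but this does not affect the correctness of the argument.
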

Note that, if $S_{\rm{uc}}$, lemmas \ref{lem:uclosrmon} and \ref{lem:decompositionuclosr} together imply $\mathsf{LOSR}$ monotonicity of $S_{\rm{uc}}$. So, the only missing ingredient is to show convexity of $S_{\rm{uc}}$, which we do next.
\begin{lemma}(Convexity of $S_{\rm{uc}}$)
\label{lem:convexity} 
The statistical strength $S_{\rm{uc}}$ is convex. That is,
\begin{align}
\label{eq:convexitySuc}
S_{\rm{uc}}\left(\mu\, \boldsymbol{P}+(1-\mu)\boldsymbol{P}'\right)&\leq \mu\, S_{\rm{uc}}\left(\boldsymbol{P}\right)+(1-\mu)\,S_{\rm{uc}}\left(\boldsymbol{P}'\right),
\end{align}
for all $0\leq\mu\leq 1$ and all $\boldsymbol{P},\,\boldsymbol{P}'\in\mathsf{NS}$.
\end{lemma}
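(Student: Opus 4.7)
The plan is to mimic the convexity proof for the relative entropy of nonlocality (Lemma \ref{lem:convexity_Snl}), adapted to accommodate the extra outer maximization over input distributions $\boldsymbol{D}\in\mathsf{UC}$. The crux of the argument is that one can use a \emph{single} $\boldsymbol{D}$ across the three instances of $S_{\rm{uc}}$ appearing in Eq.~\eqref{eq:convexitySuc}, turning the problem into an application of joint convexity of the classical relative entropy $S$, together with convexity of the set $\mathsf{L}$.

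Concretely, I would first let $\boldsymbol{D}^*\in\mathsf{UC}$ be a distribution attaining the outer maximum in $S_{\rm{uc}}(\mu\boldsymbol{P}+(1-\mu)\boldsymbol{P}')$, and let $\boldsymbol{P}^*,\boldsymbol{P}'^*\in\mathsf{L}$ be local behaviors realising the inner minima for $\boldsymbol{P}$ and $\boldsymbol{P}'$, respectively, at this fixed $\boldsymbol{D}^*$. Since $\mathsf{L}$ is convex, the mixture $\mu\boldsymbol{P}^*+(1-\mu)\boldsymbol{P}'^*$ is itself local, and therefore a (generally suboptimal) candidate for the inner minimization that defines $S_{\rm{uc}}(\mu\boldsymbol{P}+(1-\mu)\boldsymbol{P}')$ at $\boldsymbol{D}^*$. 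Upper-bounding that quantity by this candidate and then invoking joint convexity of $S$ in its two arguments would split the result into a $\mu$-weighted sum of $S\!\left(\boldsymbol{P}\cdot\boldsymbol{D}^*\|\boldsymbol{P}^*\cdot\boldsymbol{D}^*\right)$ and $S\!\left(\boldsymbol{P}'\cdot\boldsymbol{D}^*\|\boldsymbol{P}'^*\cdot\boldsymbol{D}^*\right)$.

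Finally, since $\boldsymbol{D}^*\in\mathsf{UC}$ is a valid (though generally suboptimal) candidate in the outer maximizations defining $S_{\rm{uc}}(\boldsymbol{P})$ and $S_{\rm{uc}}(\boldsymbol{P}')$, and since $\boldsymbol{P}^*,\boldsymbol{P}'^*$ are the optimal inner minimizers at $\boldsymbol{D}^*$, each of those two terms is upper-bounded by the corresponding $S_{\rm{uc}}$, yielding the desired Eq.~\eqref{eq:convexitySuc}. I do not foresee any substantive obstacle: the argument is essentially identical to the one for $S_{\rm{nl}}$, and the only mild subtlety is to verify that the $\boldsymbol{D}^*$ fixed from the left-hand side is a legitimate choice for both right-hand-side maximizations, which follows trivially because all three outer maxima are taken over the same set $\mathsf{UC}$.
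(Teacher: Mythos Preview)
Your proposal is correct and is precisely the analogous argument the paper has in mind: the paper's own proof says only that it is ``totally analogous to that of Lem.~\ref{lem:convexity_Snl}'', and your adaptation—fixing the optimal $\boldsymbol{D}^*\in\mathsf{UC}$ for the mixture, replacing the inner minimizer by the convex combination $\mu\boldsymbol{P}^*+(1-\mu)\boldsymbol{P}'^*\in\mathsf{L}$, applying joint convexity of $S$, and then bounding each term by the full $\max_{\boldsymbol{D}\in\mathsf{UC}}$—is exactly that analogue, with the roles of the outer maximization and inner minimization handled in the natural way.
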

The proof of this lemma is totally analogous to that of Lem. \ref{lem:convexity_Snl}.

\begin{thebibliography}{99}

\bibitem{Bell64}
J. S. Bell, Physics {\bf 1}, 195 (1964).

\bibitem{Brunner13}
N. Brunner, D. Cavalcanti, S. Pironio, V. Scarani and S. Wehner, {\it Bell nonlocality}, Rev. Mod. Phys. {\bf 86}, 419 (2014).

\bibitem{Barrett05}
J. Barrett, L. Hardy, and A. Kent, {\it No Signaling and Quantum Key Distribution}, Phys. Rev. Lett. {\bf 95}, 010503 (2005).

\bibitem{Acin06}
A. Ac\'in, N. Gisin, and L. Masanes, {\it From Bell's Theorem to Secure Quantum Key Distribution}, Phys. Rev. Lett. {\bf 97},
120405 (2006).

\bibitem{Acin07}
A. Ac\'in, N. Brunner, N. Gisin, S. Massar, S. Pironio, and V. Scarani, Phys. Rev. Lett. {\bf 98}, 230501
(2007).

\bibitem{Colbeck07}
R. Colbeck, {\it Quantum and Relativistic Protocols for Secure
Multi-Party Computation}, PhD dissertation, Univ. Cambridge
(2007). arXiv: 0911.3814.

\bibitem{Pironio10}
 S. Pironio et al., {\it Random numbers certified by Bell's theorem},
Nature {\bf 464}, 1021 (2010)

\bibitem{Colbeck12}
R. Colbeck and R. Renner, {\it Free randomness can be amplified}, Nature Physics {\bf 8}, 450 (2012).

\bibitem{Gallego13}
R. Gallego, Lluis Masanes, G. de la Torre, C. Dhara, L. Aolita, A. Ac\'in, {\it Full randomness from arbitrarily deterministic events}, 
Nature Communications {\bf 4}, 2654 (2013).

\bibitem{Buhrman10}
H. Buhrman, R. Cleve, S. Massar, and R. de Wolf, {\it Non-locality and communication complexity}, Rev. Mod. Phys. {\bf 82}, 665 (2010).

\bibitem{Barrett05b}
J. Barrett, N. Linden, S. Massar, S. Pironio, S. Popescu, and D. Roberts,  {\it Nonlocal correlations as an information-theoretic resource}, 
Phys. Rev. A {\bf 71}, 022101 (2005).

\bibitem{Allcock09}
J. Allcock, N. Brunner, N. Linden, S. Popescu, P. Skrzypczyk, and T. Vertesi, {\it Closed sets of non-local correlations}, Phys. Rev. A {\bf 80}, 062107 (2009).

\bibitem{Lang14} 
B. Lang, T. Vertesi, M. Navascu\'es
{\it Closed sets of correlations: answers from the zoo}, Journal of Physics A {\bf 47}, 424029 (2014).

\bibitem{Beigi14}
S. Beigi and A. Gohari, {\it A Monotone Measure for Non-Local Correlations}, arXiv:1409.3665.

\bibitem{Gallego12}
R. Gallego, L. E. W\"urflinger, A. Ac\'in, and M. Navascu\'es, {\it An operational framework for nonlocality}, Phys. Rev. Lett. {\bf 109}, 070401 (2012).

\bibitem{Joshi13}
P. Joshi, A. Grudka, K. Horodecki, M. Horodecki, P. Horodecki, R. Horodecki, {\it No-broadcasting of non-signaling boxes via operations which transform local boxes into local ones}, QIC {\bf 13}, 567 (2013).

\bibitem{deVicente14}
J. I. de Vicente, {\it On nonlocality as a resource theory and nonlocality measures}, J. Phys. A: Math. Thm. {\bf 47}, 424017 (2014).

\bibitem{Horodecki15}
K. Horodecki, A. Grudka, P. Joshi, W. K\l obus, and J. \L odyga, {\it Axiomatic approach to contextuality and nonlocality}, Phys. Rev. A {\bf 92}, 032104 (2015).

\bibitem{HorodeckiHorodeckiHorodecki}
R. Horodecki, P. Horodecki, M. Horodecki and K.
Horodecki, {\it Quantum entanglement}, Rev. Mod. Phys. {\bf 81}, 865 (2009).


\bibitem{Gallego15}
R. Gallego and L. Aolita, {\it Resource theory of steering}, Phys. Rev. X {\bf 5}, 041008 (2015).

\bibitem{Resourcetheorythermodynamics}
F. G. S. L. Brand\~ao, M. Horodecki, J. Oppenheim, J. M. Renes and R. W. Spekkens, {\it Resource Theory of Quantum States Out of
Thermal Equilibrium}, Phys. Rev. Lett. 111, 250404 (2013).


\bibitem{Streltsov16}
A. Streltsov, G. Adesso, and M. B. Plenio, {\it Quantum Coherence as a Resource}, arXiv: 1609.02439.


\bibitem{Brandao15}
F. G. S. L. Brand\~ao and Gilad Gour, {\it Reversible Framework for Quantum Resource Theories}, Phys. Rev. Lett. {\bf 115}, 070503 (2015).

\bibitem{vanDam05}
W. van Dam, P. Grunwald, and R. Gill, {\it The statistical strength of nonlocality proofs},  IEEE-Transactions on Information Theory {\bf 51}, 2812  (2005).

\bibitem{Cover06}
T. Cover and J. Thomas.  {\it Elements of Information Theory} John Wiley \& Sons: Hoboken, NJ,
USA, (2006)

\bibitem{Short}
A. J. Short, S. Popescu, and N. Gisin, {\it Entanglement swapping for generalized nonlocal correlations}, Phys. Rev. A {\bf 73}, 012101 (2006).

\bibitem{Barrett}
J. Barrett,  {\it  Information processing in generalized probabilistic theories}, Phys. Rev. A {\bf 75}, 032304 (2007).

\bibitem{KullbackLeibler}
S. Kullback and R. A. Leibler, {\it On information and sufficiency}, Annals of Mathematical Statistics {\bf 22}, 76 (1951).

 \bibitem{Chernoff52}
   	H.\ Chernoff,
         Ann.\ Math.\ Stat.\ {\bf 23}, 493 (1952).

 \bibitem{Blahut1974}
   	R.~E.\ Blahut,
         IEEE Trans.\ Inf.\ Theo.\ {\bf 20}, 405 (1974).

 \bibitem{Hellstroem76} 
  	C.~W.\ Helstr\"om, {\it Quantum detection and estimation theory}
 	(Academic Press, New York, 1976).

 \bibitem{Holevo78}
   	A.~S.\ Holevo,
         Thm.\ Prob.\ Appl.\ {\bf 23}, 411 (1978).


 \bibitem{Audenaert2012}
  	K.~M.~R.\ Audenaert, M.\ Mosonyi, and F.\ Verstraete,
   	J.\ Math.\ Phys.\ {\bf 53}, 122205 (2012).



\bibitem{comment1}
We note that,  in Ref. \cite{deVicente14} , the terminology wirings and prior-to-input classical communication is used for the class that we call $\mathsf{LOSR}$. The ambiguity originates at the fact that, until now, it was unclear if there could be any difference between both classes of wirings. 


\bibitem{Deviceindependent}
J. Barrett, L. Hardy,  and A. Kent, {\it No Signaling and Quantum Key Distribution}, Phys. Rev. Lett. {\bf 95}, 010503 (2005); A. Ac\'in, N. Gisin, and L. Masanes, {\it From Bell's Theorem to Secure Quantum Key Distribution}, Phys. Rev. Lett. {\bf 97}, 120405 (2006); A. Ac\'in {\it et al}., {\it Device-Independent Security of Quantum Cryptography against Collective Attacks}, Phys. Rev. Lett. {\bf 98}, 230501 (2007).

\bibitem{VedralPlenio}
V. Vedral and M. B. Plenio, {\it Entanglement measures and purification procedures}, Phys. Rev. A {\bf 57}, 1619 (1998); M. B. Plenio and S. Virmani, {\it An introduction to entanglement measures}, Quant. Inf. Comput. {\bf 7}, 1 (2007).

\bibitem{Tobias2012}
T. Fritz, {\it Beyond Bell's theorem: correlation scenarios}, New J. Phys. {\bf 14} 103001 (2012).

\bibitem{Bacon03}
D. Bacon and B. F. Toner, {\it Bell Inequalities with Auxiliary Communication}, Phys. Rev. Lett. {\bf 90}, 157904 (2003).


\end{thebibliography}
\end{document}